\definecolor{lightblue}{rgb}{0.5,0.5,1.0}
\definecolor{darkred}{rgb}{0.5,0,0}
\definecolor{darkgreen}{rgb}{0,0.5,0}
\definecolor{darkblue}{rgb}{0,0,0.5}
\tikzset{
  xarrow/.style={arrows={-Latex[angle=100:2mm]}}
}
\newtheorem{lemma}{Lemma}
\newtheorem{corollary}[lemma]{Corollary}
\DeclareMathOperator{\Sym}{Sym}
\DeclareMathOperator{\id}{id}
\DeclareMathOperator{\Aut}{Aut}
\DeclareMathOperator{\Refx}{Ref}
\let\Sel\undefined
\DeclareMathOperator{\Sel}{Sel}
\DeclareMathOperator{\Inv}{Inv}
\DeclareMathOperator{\T}{T}
\DeclarePairedDelimiter{\ceil}{\lceil}{\rceil}
\definecolor{blue0}{RGB}{196, 196, 255}
\definecolor{blue1}{RGB}{140, 140, 255}
\definecolor{blue2}{RGB}{112, 112, 255}
\definecolor{blue3}{RGB}{0, 0, 255}
\tikzset{
	hatch distance/.store in=\hatchdistance,
	hatch distance=10pt,
	hatch thickness/.store in=\hatchthickness,
	hatch thickness=2pt
}
\pgfqpoint{\hatchdistance}{\hatchdistance}}
\tikzset{pruned/.style={fill=gray,pattern=flexible hatch,hatch distance=3pt,hatch thickness=0.5pt,pattern color=lightgray}}
\tikzset{snake it/.style={decorate, decoration=snake, segment length=4mm, thick, draw=blue}}
\newcommand{\Traces}{\textsc{Traces}}
\newcommand{\nauty}{\textsc{nauty}}
\newcommand{\bliss}{\textsc{bliss}}
\newcommand{\saucy}{\textsc{saucy}}
\newcommand{\dejavu}{\textsc{dejavu}}
\newcommand{\conauto}{\textsc{conauto}}
\DeclareMathOperator{\Pru}{Prune}
\DeclareMathOperator{\Dev}{Dev}
\newcommand{\PruA}{\Pru_{\Inv}} 
\newcommand{\PruB}{\Pru_{\Aut}} 
\newcommand{\PruBW}{\Pru_{\Aut}'} 
\newenvironment{multicases}[1]
{\let\@ifnextchar\new@ifnextchar
	\left\lbrace%
	\array{@{}l*{#1}{@{\quad}l}@{}}}
{\endarray\right.\kern-\nulldelimiterspace}
\newcommand{\algorithmicdoinparallel}{\textbf{do in parallel}}
  \newcommand{\WhileP}[2][default]{\ALC@it\algorithmicwhile\ #2\ %
    \algorithmicdoinparallel\ALC@com{#1}\begin{ALC@for}}%
\newcommand\blfootnote[1]{%
  \begingroup
  \renewcommand\thefootnote{}\footnote{#1}%
  \addtocounter{footnote}{-1}%
  \endgroup
}
\tikzset{
  xarrow/.style={gray!66,arrows={-Latex[angle=90:2mm]}}
}
\begin{document}
\title{Parallel Computation of Combinatorial Symmetries}
\author{Markus Anders and Pascal Schweitzer\\ TU Darmstadt}
\maketitle

\begin{abstract}
	In practice symmetries of combinatorial structures are computed by transforming the structure into an annotated graph whose automorphisms correspond exactly to the desired symmetries. An automorphism solver is then employed to compute the automorphism group of the constructed graph. Such solvers have been developed for over 50 years, and highly efficient sequential, single core tools are available. However no competitive parallel tools are available for the task.

	We introduce a new parallel randomized algorithm that is based on a modification of the individualization-refinement paradigm used by sequential solvers. The use of randomization crucially enables parallelization. 

	We report extensive benchmark results that show that our solver is competitive to state-of-the-art solvers on a single thread, while scaling remarkably well with the use of more threads. This results in order-of-magnitude improvements on many graph classes over state-of-the-art solvers. In fact, our tool is the first parallel graph automorphism tool that outperforms current sequential tools.
\end{abstract}\blfootnote{The research leading to these results has received funding from the European Research Council (ERC) under the European Union's Horizon 2020 research and innovation programme (EngageS: grant agreement No.~{820148}).}
\section{Introduction} \label{sec:introduction}
Exploitation of symmetry has a dramatic impact on the efficiency of algorithms in various fields. 
This includes the fields of computer vision and computer graphics \cite{DBLP:journals/ftcgv/LiuHKG10}, automated reasoning~\cite{DBLP:journals/cacm/HeuleK17}, machine learning~\cite{DBLP:journals/jmlr/ShervashidzeSLMB11} and in particular convolutional neural networks~\cite{DBLP:conf/nips/GensD14}, mathematical programming~\cite{10.1007/978-1-4614-1927-3_9}, chemical databases~\cite{DBLP:conf/iccS/RuizG08},  SAT-solving~\cite{DBLP:conf/sat/KatebiSM10}, constraint programming~\cite{DBLP:reference/fai/GentPP06}, software verification~\cite{DBLP:conf/cav/ChuJ12}, model checking~\cite{DBLP:conf/forte/Godefroid99,DBLP:journals/csur/MillerDC06} and so on.

Before symmetries of a structure can be exploited, one first has to have algorithmic means to find the symmetries. For this, the structure is usually transformed into an annotated graph whose automorphisms correspond to the symmetries of the original structure. Then tools are employed that compute the graph's automorphism group. 

The current state-of-the-art implementations of solvers computing automorphism groups are \bliss{}~\cite{DBLP:conf/alenex/JunttilaK07}, \nauty{} and \Traces{}~\cite{McKay201494}, \conauto{} \cite{DBLP:conf/wea/Lopez-PresaCA13} as well as \saucy{}~\cite{Darga:2004:ESS:996566.996712}. 
All of the mentioned algorithms follow the individualization-refinement (IR) framework. 

These tools have become increasingly powerful through a multitude of techniques. 
Initially, each mentioned tool provided insightful new pruning ideas or new implementation tricks. However, many of these very diverse ideas from all the tools have transcended into all of the other tools by now. This  lead to a second generation comprised of improved versions of the tools.
The most recent and currently fastest solver is \Traces{} which excels at pruning the search space of difficult graphs. The tool has been meticulously engineered by Piperno over the past decade ever improving its performance. The tool is fastest on most graph classes, and on the few where it is not, it still performs competitively with the best of all the solvers. 

Recently the requirements of the application domains for the tools have changed. One major change stems from the different architecture of modern hardware. 
In fact, all of the aforementioned tools are  sequential, single-threaded applications: spreading the work load across multiple cores would align with the contemporary hardware trend of steadily increasing core counts. 
While there are some theoretical results, research on practical parallel isomorphism algorithms is quite limited.
In his thesis, Tener~\cite{Tener:2009:ADI:1925258} describes approaches to parallel isomorphism testing. However, he has subsequently not pursued this further in the last decade, and the described algorithm is based on algorithms of the first generation. The study~\cite{OpenMP} only performs comparisons against slow sequential algorithms.
Overall, parallel graph isomorphism testing has not witnessed any of the ideas that characterize the second generation of algorithms. Generally, fast isomorphism and automorphism algorithms have been persistently resistant to parallelization attempts in theory and practice.

The goal of this paper is to stimulate a third generation of isomorphism solvers by harnessing the power of a modern CPU for the computation of graph automorphisms. Parallelizing existing libraries is not a straightforward task, since the practical algorithms are based on the IR technique, which is a priori sequential.

\textbf{Contribution.} 
We introduce \dejavu{}, a novel randomized algorithm solving the automorphism group problem based on the IR paradigm.  
The tool 
is (1)~on a single thread competitive with currently fastest solvers available (sometimes even outperforming them), and
(2) on 8 threads outperforms the currently fastest solvers on most graph classes.
Using the de facto standard benchmark suite, we report extensive experimental results corroborating these two claims. 
The results also demonstrate the scalability of the tool as the number of threads is varied from 1 to 8.

\textbf{Underlying ideas and techniques.} 
The quintessence of our new algorithm, by which we achieve parallelizability, is to replace inherently sequential traversal strategies with randomized traversal:
\dejavu{} mainly performs repeated random root-to-leaf walks in the search tree (stemming from the IR-framework) in conjunction with a probabilistic abort criterion.
The main motivation is that computing multiple random root-to-leaf walks can be parallelized. 
To allow for an even split of the work load, various subroutines, most notably the so-called sifting algorithm which is used for the probabilistic abort criterion, also have to be parallelized efficiently.
However, to create a truly efficient tool, the algorithm has to be combined with further heuristics tailored to the new parallel, probabilistic setting.

\textbf{Randomization.} Our tool \dejavu{} is a randomized tool which, in principle, means that the output is not always correct. The idea of exploiting randomization originated from isomorphism testing algorithms~\cite{DBLP:conf/icalp/AndersS21, alenextoappear,DBLP:conf/alenex/KutzS07} which however neither compute automorphisms nor have any form of parallelization whatsoever. 
For these randomized approaches (including \dejavu{}), the user can set an error probability (e.g.~1\%) and the tool guarantees that for each input the probability of error is at most this number.
However, crucially our approach only has a 1-sided error. This means while some automorphisms may be missed when queried for the automorphism group of a graph, the solver guarantees that the output consists entirely of automorphisms of the graph. For applications exploiting symmetry this is the \emph{right kind of error}. This way, they may sometimes fail to exploit symmetries that were missed but this only slows down the running time. It does not lead to incorrect results for the application. 
\section{Preliminaries} \label{sec:preliminaries}
\subsection{Individualization-Refinement} \label{subsec:ir}
Following \cite{McKay201494} closely, we introduce IR algorithms. The summary is focused on the results necessary to describe automorphism computations and what is needed for our algorithms.

\textbf{Colored Graphs.} An undirected, finite graph $G = (V, E)$ consists of a set of vertices $V \subseteq \mathbb{N}$ and a set of edges $E \subseteq V^2$, where $E$ is symmetric. We always assume $V = \{1, \dots{}, n\}$. 

A coloring $\pi \colon V \to \{1, \dots{}, k\}$ is a surjective map, mapping vertices of a graph to \emph{cells} $1, \dots{}, k$. We call $\pi^{-1}(i) \subseteq V$ with $i \in \{1, \dots{}, k\}$ the \emph{$i$-th cell} of $\pi$, which is non-empty since~$\pi$ is surjective. With $|\pi| = k$ we denote the number of cells in a coloring. If $|\pi| = n$ holds, we call $\pi$ \emph{discrete}. Note that a discrete coloring also characterizes a permutation of $V$. 

A colored graph $(G, \pi)$ consists of a graph and a coloring. 
The symmetric group on~$\{1,\ldots,n\}$ is denoted~$\Sym(n)$. With $\Aut(G)$ we denote the automorphism group of a graph. An element $\varphi \in \Aut(G)$ is a permutation of vertices which maps the graph to itself, i.e., a bijective map $\varphi\colon V \to V$ where $G^\varphi := (\varphi(V), \varphi(E)) = (V, E) = G$ holds. For colored graphs we additionally require that the coloring is preserved, i.e., a vertex of a cell $c$ must be mapped to a vertex of cell $c$.
We thus define the colored automorphism group~$\Aut(G, \pi)$ as those permutations~$\varphi$ with $(G, \pi)^\varphi = (G^\varphi, \pi^\varphi) = (G, \pi)$. Note that in all of these definitions actual equality, e.g., equality of adjacency matrices and not isomorphism, is required. In the following, we only consider uncolored input graphs for the sake of simplicity. Let us remark, however, that we could use exactly the same machinery for colored graphs (see \cite{McKay201494}).

\textbf{Refinement.} In the following, we want to \emph{individualize} vertices and \emph{refine} colorings. Individualizing vertices in a coloring is a process that artificially forces the vertex into its own singleton cell. We use $\nu \in V^*$ to denote a sequence of vertices. In particular, we can record in such a sequence which vertices have been individualized.

A \emph{refinement} is a function $\Refx \colon G \times V^* \to \Pi$. Here~$\Pi$ is the set of colorings of~$V$, i.e., the set of ordered partitions of~$V$.
Given a graph $G$ and sequence of vertices $\nu$, it must satisfy the following properties: first, it is invariant under isomorphism, i.e., $\Refx(G^\varphi, \nu^\varphi) = \Refx(G, \nu)^\varphi$ holds for all $\varphi \in \Sym(n)$.
Secondly, it respects vertices in $\nu$ as being individualized, i.e., $\{v\}$ is a singleton cell in $\Refx(G, \nu)$ for all $v \in \nu$.

\textbf{Cell Selector.} If refinement classifies all vertices into different cells, determining automorphisms and isomorphisms for a graph is easy, after all, cells have to be preserved. Otherwise, individualization is used to artificially single out a vertex inside a non-singleton class. The task of a \emph{cell selector} is to isomorphism invariantly pick a non-singleton cell of the coloring. In the IR paradigm, all vertices of the selected cell will then be individualized one after the other using some form of backtracking. After the individualization, refinement is applied again and the process continues recursively.
Formally, a cell selector is a function $\Sel \colon G \times \Pi \to 2^V$ into the power set of~$V$ satisfying the following properties:
\begin{itemize}
	\item It is invariant under isomorphism, that is $\Sel(G^\varphi,\pi^\varphi) = \Sel(G,\pi)^\varphi$ holds for $\varphi \in \Sym(n)$.
	\item If $\pi$ is discrete then $\Sel(G,\pi) = \emptyset$.
	\item If $\pi$ is not discrete then  $|\Sel(G,\pi)| > 1$ and $\Sel(G,\pi)$ is a cell of $\pi$.
\end{itemize}

\textbf{Search Tree.} With the functions $\Refx$ and $\Sel$ at hand, we are now ready to define the \emph{search tree}. For a graph $G$ we use $\T_{(\Refx, \Sel)}(G)$ to denote the search tree of~$G$ with respect to refinement operator~$\Refx$ and cell selector~$\Sel$. The search tree is constructed as follows: each node of the search tree corresponds to a sequence of vertices of~$G$.
\begin{itemize}
	\item The root of $\T_{(\Refx, \Sel)}(G)$ is the empty sequence $\epsilon$.
	\item If $\nu$ is a node in $\T_{(\Refx, \Sel)}(G)$ and $C = \Sel(G,\Refx(G, \nu))$, then its children are $\{\nu.v \; | \; v \in C \}$, i.e., all extensions of~$\nu$ by one vertex~$v$ of~$C$.
\end{itemize}
With $\T_{(\Refx, \Sel)}(G, \nu)$ we denote the subtree of $\T_{(\Refx, \Sel)}(G)$ rooted in $\nu$. We omit indices $\Sel$ and $\Refx$ if they are apparent from context. Note that the leaves of a search tree correspond to discrete colorings of the graph, and therefore to permutations of $V$.

We recite the following crucial facts on isomorphism invariance of the search tree as given in \cite{McKay201494}, which follows from the isomorphism invariance of $\Sel$ and $\Refx$:
\begin{lemma} \label{lem:tree_invariant} For a graph $G$ and $\varphi \in \Sym(n)$ we have $\T(G)^\varphi = \T(G^\varphi)$. 
\end{lemma}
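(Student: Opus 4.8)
The plan is to proceed by induction on the depth $\ell$ of search-tree nodes, feeding the two isomorphism-invariance properties of $\Refx$ and $\Sel$ into the inductive step. First I would fix the meaning of $\T(G)^\varphi$: it is the tree obtained from $\T(G)$ by relabelling each node, which is a sequence $\nu = v_1 \cdots v_\ell \in V^*$, to $\nu^\varphi := \varphi(v_1) \cdots \varphi(v_\ell)$. Since $\varphi$ is a bijection on $V$, this relabelling is a bijection on $V^*$ that preserves lengths and preserves the ``extension by one vertex'' relation (i.e.\ $\nu^\varphi$ extends $\mu^\varphi$ by one vertex iff $\nu$ extends $\mu$ by one vertex). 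Hence it suffices to show, for every $\ell$, that $\nu$ is a node of $\T(G)$ at depth $\ell$ if and only if $\nu^\varphi$ is a node of $\T(G^\varphi)$ at depth $\ell$; the edge set then matches automatically because the parent--child relation in a search tree is exactly the extension-by-one-vertex relation restricted to nodes.

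For the base case $\ell = 0$, the unique depth-$0$ node of each tree is the empty sequence $\epsilon$, and $\epsilon^\varphi = \epsilon$. For the inductive step, suppose the equivalence holds at depth $\ell$ and let $\nu$ be a depth-$\ell$ node of $\T(G)$, so that by hypothesis $\nu^\varphi$ is a depth-$\ell$ node of $\T(G^\varphi)$. By the tree construction, the children of $\nu$ in $\T(G)$ are the sequences $\nu.v$ with $v \in C := \Sel(G, \Refx(G, \nu))$, while the children of $\nu^\varphi$ in $\T(G^\varphi)$ are the sequences $\nu^\varphi.w$ with $w \in \Sel(G^\varphi, \Refx(G^\varphi, \nu^\varphi))$. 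Invariance of $\Refx$ gives $\Refx(G^\varphi, \nu^\varphi) = \Refx(G,\nu)^\varphi$, and then invariance of $\Sel$ gives $\Sel(G^\varphi, \Refx(G,\nu)^\varphi) = \Sel(G,\Refx(G,\nu))^\varphi = C^\varphi = \{\varphi(v) : v \in C\}$. Therefore the children of $\nu^\varphi$ are exactly the sequences $\nu^\varphi.\varphi(v) = (\nu.v)^\varphi$ for $v \in C$; conversely every such sequence arises this way. This is precisely the claimed equivalence at depth $\ell+1$, and it also identifies the child sets, so the tree structures agree. The induction closes.

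The argument is essentially bookkeeping, so I do not expect a genuine obstacle; the only point needing a moment's care is chaining the two invariance properties, i.e.\ checking that $\Refx(G,\nu)^\varphi$ is again a coloring of $V$ to which $\Sel(G^\varphi,\cdot)$ may legitimately be applied. This is fine because $\varphi \in \Sym(n)$ permutes $V$ and hence sends colorings to colorings with the same cell sizes, and the refinement of $G^\varphi$ along $\nu^\varphi$ individualizes exactly the images of the vertices individualized in $G$ along $\nu$. This is exactly why the statement is merely recited from \cite{McKay201494}: it is an immediate consequence of the defining invariance axioms of $\Refx$ and $\Sel$.
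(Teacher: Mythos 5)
Your proof is correct, and it follows exactly the route the paper indicates: the paper gives no proof of its own, merely citing \cite{McKay201494} and noting that the lemma ``follows from the isomorphism invariance of $\Sel$ and $\Refx$'', which is precisely the content of your depth-induction. Spelling out the induction and the chaining $\Refx(G^\varphi,\nu^\varphi)=\Refx(G,\nu)^\varphi$, $\Sel(G^\varphi,\Refx(G,\nu)^\varphi)=\Sel(G,\Refx(G,\nu))^\varphi$ is the standard bookkeeping argument, so there is nothing to add.
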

\begin{corollary} \label{lem:auto_tree_correspondence1} If $\nu$ is a node of $\T(G)$ and $\varphi \in \Aut(G)$, then $\nu^\varphi$ is a node of $\T(G)$ and $\T(G, \nu)^\varphi = \T(G, \nu^\varphi)$.
\end{corollary}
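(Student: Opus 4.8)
The plan is to derive the corollary directly from Lemma~\ref{lem:tree_invariant} by specializing to the case $\varphi \in \Aut(G)$, in which $G^\varphi = G$. First I would observe that, since $\varphi$ is an automorphism of $G$, Lemma~\ref{lem:tree_invariant} gives $\T(G)^\varphi = \T(G^\varphi) = \T(G)$. Recall that applying $\varphi$ to the search tree means applying $\varphi$ componentwise to every vertex sequence labelling a node; since this is a relabelling, it carries the tree (as a graph, together with its parent--child edges) isomorphically onto $\T(G^\varphi)$. In particular the node set is preserved, so if $\nu$ is a node of $\T(G)$ then $\nu^\varphi$ is a node of $\T(G)^\varphi = \T(G)$, which is the first claim.

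For the second claim I would characterize the subtree $\T(G,\nu)$ combinatorially: by the inductive construction of the search tree, every child of a node extends its parent sequence by exactly one vertex, so a node $\mu$ of $\T(G)$ lies in $\T(G,\nu)$ if and only if $\nu$ is a prefix of $\mu$. Applying $\varphi$ componentwise to a sequence clearly preserves the prefix relation: $\nu$ is a prefix of $\mu$ iff $\nu^\varphi$ is a prefix of $\mu^\varphi$. Combining this with $\T(G)^\varphi = \T(G)$, as $\mu$ ranges over the nodes of $\T(G,\nu)$ the image $\mu^\varphi$ ranges exactly over the nodes of $\T(G)$ having $\nu^\varphi$ as a prefix; that is, $\T(G,\nu)^\varphi$ and $\T(G,\nu^\varphi)$ have the same node sets. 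Since $\varphi$ acts on $\T(G)$ as a graph isomorphism and sends the root $\nu$ of the first subtree to the root $\nu^\varphi$ of the second, it restricts to a graph isomorphism between the two subtrees, yielding $\T(G,\nu)^\varphi = \T(G,\nu^\varphi)$.

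I do not expect a serious obstacle here: the content is essentially bookkeeping on top of Lemma~\ref{lem:tree_invariant}. The only point requiring care is making precise what ``$\varphi$ applied to a (sub)tree'' denotes and checking that the subtree rooted at a node is exactly the set of nodes having that node as a prefix; once this is pinned down, the argument is a one-line specialization plus a prefix-preservation observation. An alternative, slightly more pedestrian route would be induction on the height of the subtree $\T(G,\nu)$, using the child rule $C = \Sel(G,\Refx(G,\nu))$ together with the isomorphism invariance of $\Sel$ and $\Refx$ to show $\Sel(G,\Refx(G,\nu))^\varphi = \Sel(G^\varphi,\Refx(G^\varphi,\nu^\varphi)) = \Sel(G,\Refx(G,\nu^\varphi))$, so that children are matched with children at every level; but invoking Lemma~\ref{lem:tree_invariant} as a black box is cleaner.
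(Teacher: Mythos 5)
Your argument is correct and matches the paper's intent: the corollary is stated as an immediate consequence of Lemma~\ref{lem:tree_invariant} (the paper gives no separate proof, citing isomorphism invariance of $\Refx$ and $\Sel$), and your specialization $\T(G)^\varphi=\T(G^\varphi)=\T(G)$ for $\varphi\in\Aut(G)$ plus the prefix-preservation bookkeeping for the rooted subtrees is exactly the intended derivation.
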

We have yet to mention how the search tree is used to find automorphisms of a graph: 
\begin{lemma} \label{lem:leaf_auto_correspondence2} If $\nu$ and $\nu'$ are leaves of $\T(G)$, then there exists an automorphism $\varphi \in \Aut(G)$ such that $\nu = \varphi(\nu')$ if and only if $\Refx(G, \nu')^{-1} \cdot \Refx(G, \nu)$ is an automorphism of $G$. 
\end{lemma}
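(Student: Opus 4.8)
The plan is to derive both implications from the isomorphism-invariance of $\Refx$ together with the elementary tree facts already established. Since $\nu$ and $\nu'$ are leaves, the colorings $\pi:=\Refx(G,\nu)$ and $\pi':=\Refx(G,\nu')$ are discrete, hence permutations of $V$, and the object in the statement is the permutation $\varphi:=\Refx(G,\nu')^{-1}\cdot\Refx(G,\nu)=(\pi')^{-1}\cdot\pi$. The only structural fact I use about the action on colorings is that on a \emph{discrete} coloring $\rho$ it coincides with the group product, $\rho^{\psi}=\rho\cdot\psi$ for all $\psi\in\Sym(n)$. For ``$\Rightarrow$'', suppose $\psi\in\Aut(G)$ satisfies $\nu=\psi(\nu')=(\nu')^{\psi}$. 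Using $G^{\psi}=G$ and isomorphism-invariance of $\Refx$ we get $\pi=\Refx(G,\nu)=\Refx(G^{\psi},(\nu')^{\psi})=\Refx(G,\nu')^{\psi}=\pi'\cdot\psi$, hence $\psi=(\pi')^{-1}\cdot\pi=\varphi$, which is therefore an automorphism of $G$.

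For ``$\Leftarrow$'', suppose $\varphi=(\pi')^{-1}\cdot\pi\in\Aut(G)$; it suffices to show $(\nu')^{\varphi}=\nu$, since then $\varphi$ is the required automorphism. By Corollary~\ref{lem:auto_tree_correspondence1}, $(\nu')^{\varphi}$ is a node of $\T(G)$ with $\T(G,\nu')^{\varphi}=\T(G,(\nu')^{\varphi})$; as $\nu'$ is a leaf, so is $(\nu')^{\varphi}$. Repeating the computation above,
\[\Refx\bigl(G,(\nu')^{\varphi}\bigr)=\Refx(G,\nu')^{\varphi}=\pi'\cdot\varphi=\pi'\cdot(\pi')^{-1}\cdot\pi=\pi=\Refx(G,\nu).\]
Thus $(\nu')^{\varphi}$ and $\nu$ are two leaves of $\T(G)$ carrying the same discrete refined coloring, and what remains is to conclude that they coincide as vertex sequences.

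This concluding step is the technical heart of the argument, and the place where isomorphism-invariance of the cell selector $\Sel$ genuinely enters. The natural route is to prove the stronger statement that the map sending a leaf $\mu$ of $\T(G)$ to the discrete coloring $\Refx(G,\mu)$ is injective, by descending induction on the search tree: renaming via Lemma~\ref{lem:tree_invariant} one reduces to the case where the common coloring is the identity coloring, then argues that the two leaves share their first vertex (a vertex of the fixed cell $\Sel(G,\Refx(G,\epsilon))$) and recurses into the common child subtree. I expect this injectivity --- and specifically the ``same first vertex'' step --- to be the main obstacle: it is not forced by the two axioms literally imposed on $\Refx$ alone, and one appeals additionally to the standard property that $\Refx(G,\nu.v)$ is at least as fine as $\Refx(G,\nu)$. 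With that property in hand the two displayed computations supply the rest of the argument.
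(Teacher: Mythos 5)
The paper never actually proves this lemma --- it is recited from the nauty/Traces paper \cite{McKay201494} --- so the comparison is really against the standard argument in that framework. Your forward direction is fine (up to the harmless convention issue of whether the action of $\psi$ on a discrete coloring is $\rho\cdot\psi$ or $\rho\circ\psi^{-1}$, which only swaps $\varphi$ with $\varphi^{-1}$), and your reduction of the backward direction to ``two leaves of $\T(G)$ with the same discrete refined coloring coincide'' is exactly the right reduction and matches the standard proof. But as written the proposal is incomplete precisely at the step you yourself flag: the injectivity of the map leaf $\mapsto \Refx(G,\text{leaf})$ is only sketched (``same first vertex, then recurse''), and the reduction to ``the common coloring is the identity'' via Lemma~\ref{lem:tree_invariant} does not by itself produce that first-vertex argument.

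Your suspicion that this step is not a consequence of the two axioms the paper imposes on $\Refx$ is in fact correct, and worth making explicit: take a rigid graph $G$, let $\Refx(G,\epsilon)$ be the unit coloring, and let $\Refx(G,(v))=\delta$ for every $v$, with $\delta$ one fixed discrete coloring (extended to all $G^\varphi$ by transport, which is well defined since $\Aut(G)$ is trivial). Both axioms hold, every $(v)$ is a leaf, $\Refx(G,(w))^{-1}\cdot\Refx(G,(v))=\mathrm{id}\in\Aut(G)$, yet no automorphism maps $(w)$ to $(v)$ for $v\neq w$. So the ``if'' direction, and with it your injectivity claim, genuinely needs the additional properties of practical refinements that are part of the framework of \cite{McKay201494} but not restated here: $\Refx(G,\nu.v)$ refines $\Refx(G,\nu)$ (so that a singleton keeps its numeric color down the branch), and the individualized vertex is placed at a position in the selected cell that is determined by the parent coloring alone (e.g.\ the minimal color of $\Sel(G,\Refx(G,\nu))$). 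With those two facts the induction you propose does close: at the first divergence $\sigma.v_1$ versus $\sigma.v_2$, both $v_1$ and $v_2$ carry that same prescribed color in their respective leaf colorings, so equal leaf colorings force $v_1=v_2$. To count as a proof, you would have to state these extra properties as hypotheses (or cite them from \cite{McKay201494}) and carry out this argument rather than announce it as an expectation.
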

We also say that~$\nu'$ is an \emph{occurrence} of~$\nu$ if there is some automorphism~$\varphi\in \Aut(G)$ for which~$\varphi(\nu')=\nu$.

\textbf{Pruning.} In the overall algorithm, we fix a single leaf $\tau$ and then search for automorphisms by comparing other leaves to it. We call this fixed leaf $\tau$ the \emph{target leaf}. Corollary~\ref{lem:auto_tree_correspondence1} and Lemma~\ref{lem:leaf_auto_correspondence2} show that this suffices to derive all automorphisms from the search tree. 

Unfortunately, however, the search tree itself can be exponentially large in the input~\cite{DBLP:conf/stoc/NeuenS18}. 
Therefore, we want to prune it as much as possible.  

Towards this goal, we define a \emph{node invariant} $\Inv \colon G \times V^* \to I$, which is a function mapping nodes of the tree to a totally ordered set $I$. We require some further properties:
\begin{itemize}
	\item The invariant must be isomorphism invariant, i.e., we require $\Inv(G, \nu_1) = \Inv(G^\varphi, \nu_1^\varphi)$ for all $\varphi \in \Sym(n)$.
	\item If $|\nu_1| = |\nu_2|$ and $\Inv(G, \nu_1) < \Inv(G, \nu_2)$, then for all leaves $\nu_1' \in \T(G, \nu_1)$ and $\nu_2' \in \T(G, \nu_2)$ we require $\Inv(G, \nu_1') < \Inv(G, \nu_2')$.
\end{itemize}
It follows that even if we remove all nodes of the tree whose invariant deviates from the corresponding node invariant on the same level on the path to the target leaf, we can still retrieve the entire automorphism group. 
This operation is called \emph{pruning using invariants}.
Formally, we define $\PruA(\tau', \nu')$ to denote the operation that removes the subtree of node $\nu'$ if $\Inv(G, \tau') \neq \Inv(G, \nu')$, where $|\tau'| = |\nu'|$ holds and $\tau'$ is the prefix of length~$|\nu'|$ of $\tau$. 

We now describe \emph{pruning using automorphisms}. 
Assume we already have $\varphi \neq \id$ of $\Aut(G)$ available. For nodes $\nu$ where $\nu^\varphi$ is \emph{not} a prefix of the target leaf, we define $\PruB(\nu, \nu^\varphi)$ to denote the operation which removes the subtree rooted at $\nu^\varphi$ from the search tree. 
Applying $\PruB$ can only cut away parts of the search tree which are generated by the already available automorphisms anyway \cite{McKay201494}.

\subsection{Schreier-Sims Fundamentals} \label{subsec:randomschreiersims}
The procedure to aggregate automorphisms of \dejavu{} works on similar principles as the random Schreier-Sims algorithm, which provides us with a data structure to dynamically manage permutation groups. 
To be more precise, our algorithm needs a way to determine whether a newly found automorphism $\varphi$ is in the group generated by the automorphisms that were found previously. 
The procedure we use for this is called \emph{sifting}.
We give a brief description following the lines of \cite{seress_2003}.

All groups we consider are permutation groups $\Gamma \leq \Sym(\Omega)$. For the domain we always set $\Omega = \{1, \dots{}, n\}$. By $\langle S \rangle$ we denote the group generated by the elements of $S$, i.e., all elements that can be written as a product of elements of $S$. If $\langle S \rangle = \Gamma$ holds, we call $S$ a \emph{generating set} of $\Gamma$. 

\SetKw{Break}{break}
\SetKwFor{For}{for (}{)}{}
\begin{algorithm}[t]
	\SetAlgoLined
	\SetAlgoNoEnd
	\caption[Sifting Procedure]{Sifting} \label{alg:sift}
	\Fn{\Sift{S, T, B, $\varphi$}}{
		\SetKwInOut{Input}{Input}
		\SetKwInOut{Output}{Output}
		\Input{generators $S$, transversal table $T$, base points $B$, element $\varphi$}
		\Output{whether $S$, $B$ and $T$ remained unchanged}
		\For{$i = 1;\ i \leq |B|;\ i = i + 1$}{
			$b_i$ := $\varphi(B_i)$\;
			$t$ := $(T_i)_{b_i}$\;
			\lIf{$t = \bot$}{\Break
			}
			$\varphi$ := $\varphi \cdot t^{-1}$\; \label{line:sift:multiply}
		}
		\If{$\varphi \neq \id$}{
			$S$ := $S \cup \{\varphi\}$\;
			$b_i$ := $\varphi(B_i)$\;
			$(T_i)_{b_i}$ := $\varphi$\;
			\Return{false}\;
		}
		\Return{true}\;
	}
\end{algorithm}
We need the notion of a \emph{pointwise stabilizer} of a permutation group $\Gamma \subseteq \Sym(\Omega)$. Let $\beta \in \Omega$ be a point, then 
$\Gamma_{(\beta)} := \{\varphi \in \Gamma \; | \; \varphi(\beta) = \beta \}$.
For a sequence of points $(\beta_1, \dots{}, \beta_m) \in \Omega^m$ we just recursively take the pointwise stabilizer of all elements: 
\[\Gamma_{(\beta_1, \dots{}, \beta_m)} := \begin{multicases}{2}\Gamma & \text{ if } m = 0 \\
(\Gamma_{(\beta_1, \dots{}, \beta_{m-1})})_{(\beta_m)} & \text{ otherwise. }  
\end{multicases}
\]

\noindent We call a sequence of points $B = (\beta_1, \dots{}, \beta_m) \in \Omega^m$ a \emph{base} relative to $\Gamma \leq  \Sym(\Omega)$ if $\Gamma_{B} = \{\id\}$. 
For a generating set $\langle S \rangle = \Gamma$ and a base $(\beta_1, \dots{}, \beta_m)$ we define $S_i = S \cap \Gamma_{(\beta_1, \dots{}, \beta_i)}$. We call $S$ \emph{strong} relative to $\Gamma$ and $(\beta_1, \dots{}, \beta_m)$ if $\langle S_i \rangle = \Gamma_{(\beta_1, \dots{}, \beta_i)}$ holds for all $i \in \{0, \dots{}, m\}$.

Given a subgroup $\Delta \leq \Gamma$, a \emph{transversal} of $\Delta$ in~$\Gamma$ is a subset~$T\subseteq \Gamma$ that satisfies $|T \cap gH| = 1$ for every coset $g\Delta$ of $\Delta$ in~$\Gamma$.  We construct a \emph{transversal table} for a given base $B$ and generating set $S$, which contains a transversal for each subgroup $\langle S_i \rangle$~in~$\langle S_{i-1} \rangle$. We refer with $T_i$ to the transversal of $S_i$. Careful inspection of the definition reveals that each $\langle S_i \rangle$ fixes the $i$-th base point of $B$, i.e., for all $\varphi \in \langle S_i\rangle$ it is true that $\varphi(\beta_i) = \beta_i$. If we want to know the cosets of $S_i$ in $\langle S_{i - 1} \rangle$, we need to find the possible images of~$\beta_i$ in $\langle S_{i - 1} \rangle$. Elements of $\langle S_{i - 1} \rangle$ under which~$\beta_i$ has the same image are in the same coset of~$\langle S_i\rangle$.
Thus, we can differentiate transversal elements $T_i$ according to the image of~$\beta_i$ under them.
We denote by $(T_i)_b$ the element in $T_i$ mapping $\beta_i \mapsto b$ if it exists. We set $(T_i)_b = \bot$ if such an element does not exist. The cosets correspond to the orbit of $\beta_i$ in $S_{i - 1}$. Given an element $\varphi \in S_{i - 1}$, we need to determine to where $\varphi$ maps $\beta_i$ in order to find the coset in which it is contained. The representative of the coset is that element~$t$ in the transversal $T_i$ which also maps $\beta_i$ to $\varphi(\beta_i)$. By forming the product $\varphi \cdot t^{-1} \in S_i$ we obtain an element that fixes $\beta_i$. 

Algorithm~\ref{alg:sift} describes a \emph{sifting} procedure, which can be used to test membership in a given permutation group whenever a strong generating set $S$ and corresponding base $B$ are available. Otherwise, if $S$ is not strong or $B$ not complete, the sifting procedure computes a non-trivial permutation. In the version of the algorithm described here, this permutation is added to the generating set to ensure that now the sifted element is covered. Possibly one needs to extend the base for this purpose. If an element \emph{sifts successfully}, i.e., the procedure returns \textit{true}, we know that it is contained in $\langle S \rangle$. 
On the other hand, if the sifting is unsuccessful then the element was not in the group or the generating set was not strong and has been extended towards ensuring it to become strong.

The algorithm repeatedly multiplies transversal elements to the initial element. The operations preserve the property of whether the initially given element is in the group. Each operation modifies the element so that it is contained the next respective pointwise stabilizer. 

We refer to base, transversal table and generating set together as a \emph{Schreier structure}. As elements are sifted, such a structure captures the progress made towards constructing the group. 
A crucial result we exploit is the following, related to Lemma~4.3.1 in~\cite{seress_2003}: 
\begin{lemma} \label{lem:group_abort_criterion2} Let $\Gamma$ be a group, $B$ a base, $S$ a set of permutations in~$\Gamma$ and $\varphi$ a uniformly distributed element in $\Gamma$. If $\langle S \rangle \neq \Gamma$, the probability that $\varphi$ does not successfully sift through the Schreier structure defined by $B$ and $S$ is at least $\frac{1}{2}$.
\end{lemma}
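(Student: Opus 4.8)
The plan is to reduce the claim to a counting argument over the cosets of $\langle S \rangle$ in $\Gamma$. Write $H \subseteq \Gamma$ for the set of elements that sift successfully through the Schreier structure given by $B$ and $S$, i.e.\ those $\varphi$ for which Algorithm~\ref{alg:sift} returns \emph{true}. The heart of the argument is to establish $H \subseteq \langle S \rangle$; the probability bound then follows at once from Lagrange's theorem.

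First I would prove $H\subseteq\langle S\rangle$ by inspecting the sifting procedure. The running element $\varphi$ is only ever changed by an update of the form $\varphi := \varphi \cdot t^{-1}$, where $t = (T_i)_{b_i}$ is an entry of the transversal table. Since $T_i$ is a transversal of $\langle S_i \rangle$ in $\langle S_{i-1} \rangle$, we have $t \in \langle S_{i-1} \rangle$; and because $S_j = S \cap \Gamma_{(\beta_1,\dots,\beta_j)} \subseteq S$ for every $j$, this yields $\langle S_{i-1} \rangle \leq \langle S_0 \rangle = \langle S \rangle$, so $t \in \langle S \rangle$. Now take $\varphi_0 \in \Gamma$ that sifts successfully and let $t_1, \dots, t_k \in \langle S \rangle$ be the transversal entries used during its sift, in order. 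The procedure returns \emph{true} exactly when the current value of $\varphi$ equals $\id$ at the point where the routine stops — regardless of whether the main loop ran to completion or was exited early because some $(T_i)_{b_i} = \bot$. In either case that value is $\varphi_0 \cdot t_1^{-1} \cdots t_k^{-1}$, so from $\varphi_0 \cdot t_1^{-1} \cdots t_k^{-1} = \id$ we obtain $\varphi_0 = t_k \cdots t_1 \in \langle S \rangle$. Hence $H \subseteq \langle S \rangle$.

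With this inclusion established, I would finish as follows. Assume $\langle S \rangle \neq \Gamma$. Then $\langle S \rangle$ is a proper subgroup of the finite group $\Gamma$, so by Lagrange's theorem $|\langle S \rangle| \leq |\Gamma|/2$, and therefore $|H| \leq |\langle S \rangle| \leq |\Gamma|/2$. Since $\varphi$ is uniformly distributed on $\Gamma$,
\[
\Pr[\,\varphi \text{ sifts successfully}\,] = \frac{|H|}{|\Gamma|} \leq \frac{1}{2},
\]
and hence the probability that $\varphi$ does not sift successfully is at least $\frac{1}{2}$.

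The one step that requires care — and the main obstacle — is the first: one must check that \emph{every} modification made by Algorithm~\ref{alg:sift} multiplies the running element by the inverse of something already known to lie in $\langle S \rangle$, and one must treat the early-exit branch, where returning \emph{true} forces the running element to already be $\id$ and so again writes $\varphi_0$ as a product of transversal entries. Note that the hypothesis that $B$ is a base is not actually used here; it would only be needed for the reverse inclusion $\langle S \rangle \subseteq H$ (equivalently, for the correctness of a successful sift as a membership test), which is irrelevant to the bound.
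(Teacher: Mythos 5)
Your proof is correct, and it is worth noting that the paper itself gives no proof of Lemma~\ref{lem:group_abort_criterion2} at all: it only points to Lemma~4.3.1 of \cite{seress_2003}, so your argument fills a gap rather than duplicating one. Your route is also genuinely simpler than the cited source. In Seress's setting the random element is drawn from the group generated by the current data structure, so the argument there has to locate the first level at which the stabilizer chain is wrong and bound the set of sifting elements inside a proper subgroup of the appropriate stabilizer; here, because $\varphi$ is uniform over the \emph{ambient} group $\Gamma$ and the hypothesis is only $\langle S\rangle\neq\Gamma$, your one-sided inclusion (every transversal entry lies in $\langle S_{i-1}\rangle\leq\langle S\rangle$, hence any element whose residue is the identity is a product $t_k\cdots t_1\in\langle S\rangle$) combined with Lagrange's theorem, $[\Gamma:\langle S\rangle]\geq 2$, immediately gives the bound. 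You also correctly handle the early-exit branch and correctly observe that the base hypothesis is not needed for this direction (it matters only for the converse, i.e.\ that membership in $\langle S\rangle$ guarantees a successful sift, which the lemma does not claim). One small point worth making explicit: in Algorithm~\ref{alg:sift} the table $T$ accrues entries from earlier unsuccessful sifts, but each such entry is itself a product of the sifted element (which is then added to $S$) and earlier table entries, so the invariant that all entries of $T_i$ lie in $\langle S\rangle$ for the \emph{current} $S$ is maintained; your static reading of ``the Schreier structure defined by $B$ and $S$'' is therefore the right one and the argument goes through unchanged.
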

\noindent The previous results are also the foundation for the random Schreier-Sims method, which is used by all competitive solvers to detect possibilities to apply the pruning function $\PruB$. 

Let us also record that the individualized vertices in a leaf of the search tree (defined in Section~\ref{subsec:ir}) actually form a base of the respective automorphism group \cite{McKay201494}. 
\section{Parallel Computation of Automorphisms} \label{sec:randomir}
We first describe how to turn random walks on IR trees into a correct, probabilistic algorithm. 
Then, we discuss how to parallelize sifting as required by the algorithm.
Lastly, we augment the algorithm using breadth-first traversal into the underlying procedure of \dejavu{}. 

The motivation is that the three fundamental methods mentioned above parallelize efficiently as long as the IR tree is sufficiently large.

\subsection{Random Walks and Automorphisms} \label{subsec:solvingrandomwalks}
The first step of our algorithm is to compute a random walk to one of the leaves, the target leaf. 
The goal is then to find another occurrence target leaf through random walks, whereby automorphisms are found. A key observation is that by choosing uniform, random walks through the tree --- which we describe in Algorithm~\ref{alg:random_walk} --- we also get a uniform distribution of elements in the automorphism group. The algorithm applies the refinement to the input graph and then repeatedly chooses a uniform random vertex of the target cell chosen by the cell selector for individualization. Starting from the initial coloring, it then keeps individualizing and refining until the coloring becomes discrete. It returns the coloring and the sequence of individualized vertices.

Recall that we refer to a leaf $\tau'$ as an \emph{occurrence} of $\tau$ if~$\tau'$ can be mapped to $\tau$ using an element $\varphi \in \Aut(G)$ (i.e., $\varphi(\tau') = \tau$). In this situation we call $\varphi$ the \emph{corresponding automorphism} with regard to $\tau'$. Note that there is a unique occurrence of $\tau$ for every $\varphi \in \Aut(G)$:
\begin{lemma} \label{lem:leaf_auto_correspondence0} A leaf $\tau$ can be mapped to exactly $|\Aut(G)|$ leaves in $\T(G)$ using elements of the automorphism group $\Aut(G)$.
\end{lemma}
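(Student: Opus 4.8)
The plan is to identify the set of leaves reachable from $\tau$ with the orbit of $\tau$ under the action of $\Aut(G)$ on the nodes of $\T(G)$, and then compute the size of this orbit via orbit–stabilizer. First I would check that this action sends leaves to leaves: for $\varphi \in \Aut(G)$, Corollary~\ref{lem:auto_tree_correspondence1} shows that $\tau^\varphi$ is a node of $\T(G)$ with $\T(G,\tau)^\varphi = \T(G,\tau^\varphi)$; since $\tau$ is a leaf, $\T(G,\tau)$ is the single node $\tau$, so $\T(G,\tau^\varphi)$ is a single node, i.e.\ $\tau^\varphi$ is a leaf. Hence the leaves to which $\tau$ maps under $\Aut(G)$ are exactly $\{\tau^\varphi : \varphi \in \Aut(G)\}$, whose cardinality is $|\Aut(G)| / |\Aut(G)_\tau|$, where $\Aut(G)_\tau := \{\varphi \in \Aut(G) : \tau^\varphi = \tau\}$ denotes the pointwise stabilizer of the sequence $\tau$.

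The crux is then to prove $\Aut(G)_\tau = \{\id\}$. Because $\tau$ is a leaf, $\pi := \Refx(G,\tau)$ is discrete. Take $\varphi \in \Aut(G)_\tau$. Isomorphism invariance of the refinement gives $\pi^\varphi = \Refx(G,\tau)^\varphi = \Refx(G^\varphi,\tau^\varphi) = \Refx(G,\tau) = \pi$, using $G^\varphi = G$ and $\tau^\varphi = \tau$. Thus $\varphi$ preserves the colored graph $(G,\pi)$; since $\pi$ is discrete, every cell is a singleton, and a permutation preserving all singleton cells is the identity. (Equivalently, $\tau^\varphi = \tau$ means $\varphi$ fixes every vertex appearing in the sequence $\tau$, and by the fact recorded at the end of Section~\ref{subsec:randomschreiersims} those vertices form a base of $\Aut(G)$, so their pointwise stabilizer is trivial.) Substituting $|\Aut(G)_\tau| = 1$ into the orbit–stabilizer count yields exactly $|\Aut(G)|$ leaves, which is the claim; this also matches the remark that each $\varphi \in \Aut(G)$ yields its own occurrence $\varphi^{-1}(\tau)$ of $\tau$.

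I do not expect a genuine obstacle here. The two points that need care are (i) confirming that automorphic images of the leaf $\tau$ are themselves leaves rather than interior nodes — which is exactly what Corollary~\ref{lem:auto_tree_correspondence1} provides, once one notes that a leaf has no proper subtree — and (ii) the triviality of the stabilizer of the sequence $\tau$, for which discreteness of $\Refx(G,\tau)$ (or, alternatively, the base property of a leaf) is the essential input.
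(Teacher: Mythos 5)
Your proof is correct and takes essentially the same route as the paper's: both use Corollary~\ref{lem:auto_tree_correspondence1} to see that automorphic images of $\tau$ are again leaves, and the fact that the individualized vertices of a leaf form a base of $\Aut(G)$ (equivalently, that the stabilizer of the sequence $\tau$ is trivial) to conclude that distinct automorphisms give distinct images, hence exactly $|\Aut(G)|$ leaves. Your orbit--stabilizer phrasing and your self-contained derivation of the trivial stabilizer from discreteness of $\Refx(G,\tau)$ merely repackage the paper's one-line appeal to the base property recorded at the end of Section~\ref{subsec:randomschreiersims}.
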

\begin{proof} Note that $\tau$ is a base of $\Aut(G)$. Now consider an element $\varphi \in \Aut(G)$. Clearly, $\tau^\varphi$ also corresponds to a leaf in the tree (Lemma~\ref{lem:auto_tree_correspondence1}) and $\tau^\varphi$ is a base as well. Now consider a different element $\varphi' \in \Aut(G)$, i.e., $\varphi' \neq \varphi$. Clearly, $\tau^\varphi \neq \tau^{\varphi'}$ holds since $\tau$ is a base. 
\end{proof}

 \begin{lemma}\label{lem:leaf_auto_correspondence0p5}
	As a random variable, the output of Algorithm~\ref{alg:random_walk}, which is a leaf in the search tree, is uniformly distributed within each equivalence class of leaves. 
		\end{lemma}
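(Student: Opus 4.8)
The plan is to show that the probability the random walk of Algorithm~\ref{alg:random_walk} terminates at a given leaf $\nu$ of $\T(G)$ depends only on the $\Aut(G)$-orbit of $\nu$; conditioning on the orbit then immediately yields the claimed uniform distribution. So first I would pin down the reaching probability explicitly. Let $\epsilon = \mu_0, \mu_1, \dots, \mu_\ell = \nu$ be the unique root-to-$\nu$ path, so that $\mu_j$ is the length-$j$ prefix of $\nu$. At a node $\mu_j$ with $j < \ell$ the walk selects the cell $C_j := \Sel(G, \Refx(G, \mu_j))$, and the children of $\mu_j$ are exactly the extensions $\mu_j.v$ with $v \in C_j$; the walk proceeds to $\mu_{j+1}$ with probability $1/|C_j|$. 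Since the choices at distinct levels are independent,
\[
	\Pr[\text{output} = \nu] \;=\; \prod_{j=0}^{\ell - 1} \frac{1}{|C_j|}.
\]

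Next I would transport this path along an automorphism. Fix $\varphi \in \Aut(G)$. Applying $\varphi$ to a vertex sequence acts coordinatewise, so $\mu_j^\varphi$ is the length-$j$ prefix of $\nu^\varphi$, and by Corollary~\ref{lem:auto_tree_correspondence1} every $\mu_j^\varphi$ is a node of $\T(G)$; hence $\mu_0^\varphi, \dots, \mu_\ell^\varphi$ is precisely the root-to-$\nu^\varphi$ path. Using isomorphism invariance of $\Refx$ and of $\Sel$ together with $G^\varphi = G$, the cell selected at $\mu_j^\varphi$ is
\[
	\Sel\bigl(G, \Refx(G, \mu_j^\varphi)\bigr)
	= \Sel\bigl(G, \Refx(G, \mu_j)^\varphi\bigr)
	= \Sel\bigl(G, \Refx(G, \mu_j)\bigr)^\varphi
	= C_j^\varphi .
\]
Since $\varphi$ is a bijection of $V$, we have $|C_j^\varphi| = |C_j|$, so the product formula applied to $\nu^\varphi$ gives $\Pr[\text{output} = \nu^\varphi] = \prod_{j} 1/|C_j^\varphi| = \prod_j 1/|C_j| = \Pr[\text{output} = \nu]$.

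Finally I would conclude: the output mass is constant on each orbit of $\Aut(G)$ acting on the leaves, i.e.\ on each equivalence class $K$ of leaves in the sense of Section~\ref{subsec:solvingrandomwalks}; moreover every child of every node is reachable by the walk, so each such $K$ is hit with positive probability. Conditioning on the event that the output lies in $K$ therefore yields the uniform distribution on $K$, as claimed. I expect the only genuine content is the second paragraph — verifying that $\varphi$ carries the root-to-$\nu$ branch, with its sequence of selected cells, onto the root-to-$\nu^\varphi$ branch while preserving cell sizes — whereas the product formula and the conditioning step are routine bookkeeping.
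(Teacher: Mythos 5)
Your argument is correct, and it reaches the conclusion by a genuinely different route than the paper. You compute the probability of each individual leaf in closed form, $\Pr[\text{output}=\nu]=\prod_j 1/|C_j|$, and then transport the whole root-to-leaf branch along an automorphism, using $G^\varphi=G$ together with the isomorphism invariance of $\Refx$ and $\Sel$ to get $C_j^\varphi$ as the selected cells on the image branch and $|C_j^\varphi|=|C_j|$; constancy of the leaf probability on each $\Aut(G)$-orbit then gives conditional uniformity. The paper instead argues recursively down the tree: at each node the children are chosen uniformly, the children whose subtrees contain an occurrence of the target leaf are pairwise isomorphic by Lemma~\ref{lem:tree_invariant} and Corollary~\ref{lem:auto_tree_correspondence1}, hence contribute equal probability of producing an occurrence, with Lemma~\ref{lem:leaf_auto_correspondence0} supplying the correspondence between occurrences and automorphisms. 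Both proofs rest on the same invariance properties, but yours is more explicit and self-contained (it proves directly that the unconditioned leaf distribution is constant on \emph{every} orbit, and makes the positivity needed for conditioning visible), at the cost of unrolling the walk into a product formula and checking that $\varphi$ maps the branch of $\nu$, level by level and with matching termination, onto the branch of $\nu^\varphi$; the paper's version is shorter but leaves the underlying induction over levels implicit. One cosmetic point: the successive choices are conditional rather than independent, so the product formula is really the chain rule along the path, though the formula itself is exactly right.
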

		\begin{proof} There is a unique occurrence of $\tau$ for every automorphism (Lemma~\ref{lem:leaf_auto_correspondence0}). Hence, it suffices to argue that the probability of finding each occurrence of~$\tau$ through a random walk in the tree is equal. Assume that we are in a node $\nu$ of the search tree and let $\nu_1, \dots, \nu_k$ be the children of $\nu$. Let $\nu_1', \dots{}, \nu_k'$ be the children that correspond to the subtrees of $\nu$ that do contain an occurrence of $\tau$. Since we are sampling an element uniformly from $\nu_1, \dots{}, \nu_k$  in Algorithm~\ref{alg:random_walk}, each of these subtrees has the same probability of being chosen. Therefore, it suffices to argue that the chance of finding an occurrence of $\tau$ in each of $\nu_1', \dots{}, \nu_k'$ is equal. Since they all contain an occurrence of $\tau$, they can all be mapped to each other using the corresponding automorphisms. But this immediately implies that all of these subtrees must be isomorphic (Lemma~\ref{lem:tree_invariant}), showing the claim. 
		\end{proof}	
The following lemma immediately follows.
\begin{lemma}\label{lem:leaf_auto_correspondence1}
Let~$\tau$ be a fixed leaf. Consider the distribution of outputs of Algorithm~\ref{alg:random_walk} under the condition that an occurrence of~$\tau$ is computed. For such a given output~$\tau'$ consider the automorphism~$\varphi$ with~$\varphi(l)=l'$ corresponding~$\tau'$. Then~$\varphi$ is uniformly distributed in~$\Aut(G)$.
\end{lemma}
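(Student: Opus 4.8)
The plan is to reduce the claim to a change-of-variables argument: Lemma~\ref{lem:leaf_auto_correspondence0p5} already pins down the conditional distribution of the output leaf~$\tau'$, and it only remains to transport that distribution along the bijection that sends an occurrence to its corresponding automorphism.

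First I would spell out the effect of the conditioning. Conditioning the output of Algorithm~\ref{alg:random_walk} on the event that the computed leaf is an occurrence of~$\tau$ is exactly conditioning on the output lying in the equivalence class of~$\tau$. By Lemma~\ref{lem:leaf_auto_correspondence0p5} the unconditioned output is uniform within each equivalence class, so the conditioned output~$\tau'$ is uniformly distributed over the set~$L_\tau$ of all occurrences of~$\tau$, each arising with probability $1/|L_\tau|$; and by Lemma~\ref{lem:leaf_auto_correspondence0} we have $|L_\tau| = |\Aut(G)|$.

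Second, I would verify that assigning to each occurrence~$\tau'$ its corresponding automorphism is a well-defined bijection $L_\tau \to \Aut(G)$. Well-definedness is the one point requiring care: if $\varphi_1(\tau') = \tau = \varphi_2(\tau')$, then $\varphi_2^{-1}\varphi_1$ fixes every vertex appearing in the sequence~$\tau'$; since~$\tau'$ is the image of~$\tau$ under an element of $\Aut(G)$ and~$\tau$ is a base of $\Aut(G)$, the sequence~$\tau'$ is itself a base, which forces $\varphi_2^{-1}\varphi_1 = \id$, i.e.\ $\varphi_1 = \varphi_2$. Injectivity is then immediate, since $\varphi(\tau'_1) = \tau = \varphi(\tau'_2)$ gives $\tau'_1 = \tau'_2$, and surjectivity holds because any $\varphi \in \Aut(G)$ is the corresponding automorphism of the occurrence $\varphi^{-1}(\tau)$. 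This is essentially the content of the proof of Lemma~\ref{lem:leaf_auto_correspondence0}, so I would either cite it or reproduce this short argument.

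Finally, a deterministic bijection carries the uniform distribution on~$L_\tau$ to the uniform distribution on~$\Aut(G)$; applying this to the conditioned random variable~$\tau'$ yields that the corresponding automorphism~$\varphi$ is uniformly distributed in~$\Aut(G)$, as claimed. The only genuinely non-routine step is the well-definedness/bijectivity observation above — namely that an occurrence of~$\tau$ is again a base of $\Aut(G)$, so its corresponding automorphism is unique — and everything else is a direct consequence of Lemmas~\ref{lem:leaf_auto_correspondence0} and~\ref{lem:leaf_auto_correspondence0p5}.
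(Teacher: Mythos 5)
Your proposal is correct and follows essentially the same route as the paper, which simply states that the lemma ``immediately follows'' from Lemma~\ref{lem:leaf_auto_correspondence0p5} together with the occurrence--automorphism correspondence of Lemma~\ref{lem:leaf_auto_correspondence0}; you merely spell out the implicit details (conditioning gives the uniform distribution on the occurrence class, and the well-defined bijection onto $\Aut(G)$ transports it). No gaps.
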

So, Algorithm~\ref{alg:random_walk} provides us with a method to uniformly sample random automorphisms. We now need a method to collect these automorphisms and determine when we have found enough of them to generate the entire automorphism group.

\emph{Description of Algorithm~\ref{alg:random_auto}.} 
The algorithm repeatedly samples automorphisms from the automorphism group through random walks (using Algorithm~\ref{alg:random_walk}). Then, it uses a probabilistic test based on Lemma~\ref{lem:group_abort_criterion2} and Lemma~\ref{lem:leaf_auto_correspondence1} to determine termination. When a certain number~$d=\ceil{-\log_2(\frac{\varepsilon}{2})}$ of \emph{consecutively} sampled automorphisms turn out to be already covered by the previously found automorphisms (i.e., they sift successfully) the algorithm terminates. 
The initial value of~$d$ is linked to the guaranteed bound on the error probability~$\varepsilon$ that can be chosen by the user. To guarantee that the error bound is kept, when some but not~$d$ consecutively found automorphisms were discovered, the value of~$d$ is incremented. 

\begin{algorithm}[t]
	\SetAlgoLined
	\SetAlgoNoEnd
	\caption[Random Walk of the Search Tree]{Random Walk of the Search Tree} \label{alg:random_walk}
	\Fn{\RandomWalk{$G$}}{
		\SetKwInOut{Input}{Input}
		\SetKwInOut{Output}{Output}
		\Input{graph $G$}
		\Output{a random leaf of the search tree and the individualized vertices}
		$base$  := ()\;
		$col$  := \RRef{G, $[v \mapsto 1]$, base}\;
		$cell$    := \SSel{G,col}\;
		\While{$cell \neq \emptyset$}{
			$v$ := \RandomElement{cell}\;
			$base$ := $base.v$\tcp*{append~$v$ to $base$}
			$col$ := \RRef{G, col, base}\;
			$cell$   := \SSel{G, col}\;
		}
		\Return{(col, base)}\;
	}
\end{algorithm}

\begin{algorithm}[t]
	\SetKwFor{While}{while}{do in parallel}{end}
	\SetAlgoLined
	\SetAlgoNoEnd
	\caption[Randomized Automorphism Group]{Parallel Randomized Automorphisms} \label{alg:random_auto}
	\Fn{\RandomAut{$G$, $\epsilon$}}{
		\SetKwInOut{Input}{Input}
		\SetKwInOut{Output}{Output}
		\Input{graph $G$ and probability $\varepsilon$}
		\Output{a subset of~$\Aut(G)$ that generates~$\Aut(G)$ with probability at least~$1-\varepsilon$}
		$c$    := $0$, $d$    := $\ceil{-\log_2(\frac{\varepsilon}{2})}$, $S$    := $\emptyset$\;
		$(\tau, B)$    := \RandomWalk{G}\;
		initialize trivial tranversal table $T$ relative to $B$\;

		\While{$c \leq d$}{ \tcp{run multiple instances of the body of the loop in parallel}\label{line:main:loop}
			$(l', \_)$ := \RandomWalk{G}\;
			$\varphi$ := $l' \cdot \tau^{-1}$\;
			\If{$\varphi(G) = G$}{
				\lIf{$\neg$ \Sift{$S$, $T$, $B$, $\varphi$}}{$c$ := $c + 1$}\label{line:c:increment} 
				\Else{
					\lIf{$c > 0$}{$d$ := $d + 1$}
					$c$ := 0}
			}
		}
		\Return{$S$}\;
	}
\end{algorithm}

Finishing the execution therefore hinges on seeing already explored leaves as well as already generated automorphisms again (and hence the name ``\dejavu{}'').
Note that the correctness of the algorithm depends on the fact that we are probing automorphisms uniformly from the group. In Section~\ref{subsec:uniformpruning}, we introduce further techniques to prune the search tree. When we do so, we always make sure to do this in a manner that still enables us to probe uniformly after the pruning. Ensuring this suffices to retain a correct behavior of the algorithm.

We now argue correctness for Algorithm~\ref{alg:random_auto}.
\begin{lemma}\label{lem:error:prob:of:algo} Given a graph $G$ and probability $\varepsilon$, Algorithm~\ref{alg:random_auto} produces a generating set for the automorphism group of $G$ with probability at least $1 - \varepsilon$.
\end{lemma}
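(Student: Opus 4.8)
The plan is to bound the probability that the algorithm terminates while $\langle S\rangle \neq \Aut(G)$, since whenever $\langle S\rangle = \Aut(G)$ the output is correct and (by construction, see the sifting procedure) $S$ only ever contains genuine automorphisms, so the one-sided error is exactly the event of premature termination. The core quantitative ingredient is Lemma~\ref{lem:group_abort_criterion2} combined with Lemma~\ref{lem:leaf_auto_correspondence1}: whenever $\langle S\rangle \neq \Aut(G)$, a freshly sampled automorphism $\varphi$ (conditioned on being obtained at all, i.e. on the random walk producing an occurrence of $\tau$) is uniformly distributed in $\Aut(G)$, hence sifts \emph{unsuccessfully} with probability at least $\tfrac12$. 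Note that iterations of the parallel loop in which $\varphi(G)\neq G$ do not touch $c$, $d$, or $S$, so they can simply be ignored; only the iterations producing a genuine automorphism matter, and on those the conditional distribution is uniform.

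First I would set up the analysis on the (sub)sequence of loop iterations that yield genuine automorphisms $\varphi_1,\varphi_2,\dots$, and observe that the state of the Schreier structure $(S,T,B)$ is a deterministic function of the outcomes so far. I would then argue that, as long as $\langle S\rangle \neq \Aut(G)$, conditioned on the entire history each new $\varphi_j$ sifts unsuccessfully with probability at least $\tfrac12$ — crucially this holds regardless of how $S$ has grown, because $S$ is still a proper subgroup's generating set and Lemma~\ref{lem:group_abort_criterion2} applies with the \emph{current} $S$ and $B$. The key structural claim is a ``potential'' or bookkeeping argument about the counter $c$ and the threshold $d$: every time $c$ is reset to $0$ after having been positive, $d$ is incremented; so if the algorithm reaches the terminating condition $c>d$, it must in particular have observed a final run of $c = d_{\mathrm{final}} \geq d_0$ consecutive \emph{successful} sifts where $d_0 = \ceil{-\log_2(\varepsilon/2)}$. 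More carefully, I would show: the bad event ``algorithm terminates with $\langle S\rangle\neq\Aut(G)$'' is contained in the union over all possible final thresholds $d' \geq d_0$ of the event ``there exist $d'$ consecutive successful sifts immediately preceding termination, while $d$ had value $d'$'', and for the union to be reached with value $d'$ the counter must have been reset to zero $d'-d_0$ times, each reset preceded by at least one successful sift.

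The cleanest way to finish is a stopping-time / martingale-style bound: define the algorithm to ``fail at level $d'$'' if termination occurs with the threshold equal to $d'$ and $\langle S \rangle \ne \Aut(G)$ throughout. Conditioned on reaching level $d'$ with $\langle S\rangle \ne \Aut(G)$, the probability that the next $d'$ genuine automorphisms all sift successfully is at most $2^{-d'}$ (by the conditional-independence bound above; each factor is $\leq \tfrac12$). Summing the geometric-type series, $\Pr[\text{fail}] \leq \sum_{d' \geq d_0} 2^{-d'} \cdot (\text{prob.\ of ever reaching level } d' \text{ in the failing regime})$. Since reaching level $d'+1$ from level $d'$ requires at least one successful sift (which also has probability $\leq \tfrac12$ in the bad regime) followed by the eventual reset, one gets that the probability of reaching level $d'$ in the bad regime is itself at most some constant times $2^{-(d'-d_0)}$ times the probability of reaching $d_0$; combined with the $2^{-d'}$ factor this is summable and bounded by $2^{-d_0 + O(1)}$. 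I would then choose the constant $2$ inside the logarithm in $d_0 = \ceil{-\log_2(\varepsilon/2)}$ precisely so that the total is $\leq 2 \cdot 2^{-d_0} \leq \varepsilon$.

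The main obstacle I anticipate is the bookkeeping around the dynamic threshold $d$: a naive ``each failing run of length $d'$ has probability $2^{-d'}$, and $d' \geq d_0$, so the error is $\leq 2^{-d_0}$'' is too glib, because many such runs happen over the execution and their number is itself random and correlated with the sifting outcomes. The honest argument must either (a) identify a single well-defined event per run and carefully union-bound over all runs, using the fact that the threshold $d$ is nondecreasing and strictly increases at each spurious reset (so the runs at a given level $d'$ have bounded "multiplicity" — in fact, exactly the structure of a random walk on $\mathbb{Z}_{\geq 0}$), or (b) set up an explicit supermartingale whose value tracks $2^{-(d-c)}$ or similar and apply optional stopping. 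I would present approach (a) and be especially careful that the conditional bound ``$\Pr[\varphi_j \text{ sifts successfully} \mid \text{history}] \leq \tfrac12$'' is invoked correctly with respect to the $\sigma$-algebra generated by all earlier genuine automorphisms (and, harmlessly, all earlier non-automorphism samples), which is where Lemma~\ref{lem:leaf_auto_correspondence1}'s uniformity — independent of the history, since each random walk is fresh — does the essential work.
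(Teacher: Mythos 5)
Your proposal is correct and follows essentially the same route as the paper's proof: the one-sided error reduces everything to premature termination, uniformity of the sampled automorphisms (Lemma~\ref{lem:leaf_auto_correspondence1}) combined with Lemma~\ref{lem:group_abort_criterion2} gives a factor of at most $\tfrac12$ per consecutive successful sift, and the increment of $d$ at every reset turns the union bound over runs into the geometric series $\sum_{i \geq 1} \varepsilon/2^i \leq \varepsilon$, exactly as in the paper (which indexes runs as ``tests'' with thresholds $d_i = d_1 + i - 1$, matching your levels $d'$). The paper simply bounds the probability of ever reaching a given test by $1$, so your additional estimate on the probability of reaching level $d'$ is sound but unnecessary.
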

\begin{proof} First, observe that the discovered permutations are certified before being added to the group, which immediately ensures that all elements of the computed group are actual automorphisms. The algorithm can therefore only fail by not adding enough elements to the group.
	
Choosing random walks through the tree produces a uniform distribution of occurrences of the leaf $\tau$, which gives us a uniform distribution of elements in $\Aut(G)$ (Lemma~\ref{lem:leaf_auto_correspondence1}). This in turn enables us to use Lemma~\ref{lem:group_abort_criterion2} to argue correctness as follows.
	
We terminate the algorithm after we sifted uniform elements of $G$ successfully into the Schreier structure $d$ times in a row. As long as sifting fails and we add elements to the Schreier structure, we know that no error occurs and that we are not done. We view the computation as a sequence of \emph{tests} against the hypothesis that we are missing automorphisms. 
We define the beginning of a test to be right after sifting succeeds once (i.e., at the moment when $c$ is set to~$1$ in an execution of Line~\ref{line:c:increment}). The probability that the test fails (i.e., that we do not abort the test early and instead increment~$c$ for~$d$ times in a row) is bounded by $(\frac{1}{2})^d$ (Lemma~\ref{lem:group_abort_criterion2}). In order to ensure a total error bound of $\varepsilon$ for the algorithm, we require the sum of the probabilities of the tests to fail is at most~$\varepsilon$.
For this it suffices to have that the $i$-th test fails with probability at most $\frac{\varepsilon}{2^i}$. The probability that the entire computation fails is then surely at most $\varepsilon$ since
$\sum_{i = 1}^{\infty}  \frac{\varepsilon}{2^i} \leq \varepsilon.$

\noindent In order to satisfy this bound of $\frac{\varepsilon}{2^i}$, we increment $d$ after each successful test. Initially, for the first test, we set~$d_1 = \lceil -\log_2(\frac{\varepsilon}{2})\rceil $ which ensures that~$(\frac{1}{2})^{d} \leq \frac{\varepsilon}{2}$. Note the value~$d_i$ for variable~$d$ used during the~$i$-th test is then~$d_i = d_1 +i-1$, so~$(\frac{1}{2})^{d_i} < \frac{\varepsilon}{2^i}$, as desired.
\end{proof}
We should clarify that while the algorithm is based on some of the same principles as the isomorphism test of \cite{alenextoappear}, that isomorphism test neither has to consider uniformity of automorphism sampling (Lemma~\ref{lem:leaf_auto_correspondence1}), nor employ repeated testing, nor requires any form of sifting. 

Through the use of the randomized approach, a simple opportunity for parallelization arises by running the body of the while-loop in Line~\ref{line:main:loop} on multiple threads. 
In particular, only two components have to be synchronized: the state of the abort criterion $c$ and $d$, as well as the Schreier structure $S$ and $T$ which is manipulated by the sifting procedure. 
While the former is trivial, parallelization of the sifting procedure is discussed in the following section.

There is a slight technical issue we should address when running Algorithm~\ref{alg:random_auto} in parallel. 
If, say, the elements that are already generated by $S$ can be computed more quickly than those that are not, using many threads would create a bias towards finding the former type of element first, leading to an incrementation of $c$ with a probability larger than~$1/2$. This would break the error bound.
However, there is a simple way to fix the issue: whenever $c$ exceeds $d$, it suffices to additionally ensure all threads finish their current iteration. 

\subsection{Sifting in Parallel} \label{subsec:concurrentschreier}
For the abort criterion of the algorithm, we check whether an automorphism is contained in the group generated by the automorphisms found so far (see Line~\ref{line:c:increment} of Algorithm~\ref{alg:random_auto}). To check this, we sift it into a Schreier structure using a base of the automorphism group. 

As it turns out, sifting elements is sometimes expensive: using a conventional, sequential implementation of the sifting procedure to determine the abort criterion of our algorithm does not scale with more threads. In practice, sifting would often become the bottleneck.

For the random abort criterion we observe the following when sifting elements. 
\begin{enumerate}
    \item The base is never changed or extended. 
    \item Changes in the transversal tables $T$ are always local to one level in the Schreier structure.
	\item In practice, if sifting is expensive, many elements are sifted. 
	The computationally expensive part is then mostly multiplication of elements (Line~\ref{line:sift:multiply} of Algorithms~\ref{alg:sift} and~\ref{alg:parasift}).
\end{enumerate}
We should stress that in particular, (1) and (2) are generally \emph{not true} when sifting is employed by previous IR algorithms, and are indeed specific to the way it is used by Algorithm~\ref{alg:random_auto}. 

Crucially, these three observations enable a rather simple modification to the algorithm: we can sift elements into a Schreier structure concurrently, as long as we synchronize changes to transversal tables when changing a level. We add a lock for every level and one global lock for the generating set to enable parallel sifting on a fixed base (see Algorithm~\ref{alg:parasift}). 
\begin{algorithm}[t]
	\SetAlgoLined
	\SetAlgoNoEnd
	\caption[Sifting Procedure]{Thread-safe Sifting} \label{alg:parasift}
	\Fn{\Sift{S, T, B, $\varphi$}}{
		\SetKwInOut{Input}{Input}
		\SetKwInOut{Output}{Output}
		\Input{generators $S$, transversal table $T$, base $B$, element $\varphi$}
		\Output{whether $S$ and $T$ remained unchanged}
		\For{$i = 1;\ i \leq |B|;\ i = i + 1$}{
			$b_i$ := $\varphi(B_i)$\;
			$t$ := $(T_i)_{b_i}$\;
			\lIf{$t = \bot$}{\Break
			}
			$\varphi$ := $\varphi \cdot t^{-1}$\;
		}
		\If{$\varphi \neq \id$}{
			{\textbf{\textit{acquire lock for level $i$}}}\;
			{\textbf{\textit{acquire lock for generators}}}\;
			$S$ := $S \cup \{\varphi\}$\;
			{\textbf{\textit{release lock for generators}}}\;
			$b_i$ := $\varphi(B_i)$\;
			update $(T_i)_{b_i} = \varphi$\;
			{\textbf{\textit{release lock for level $i$}}}\;
			\Return{false}\;
		}
		\Return{true}\;
	}
\end{algorithm}
We should remark that the locking mechanism could be made more granular to further improve scaling.
However, due to observation (3), we never deemed this necessary in practice.
\subsection{Uniform Pruning} \label{subsec:uniformpruning}
\begin{figure}
	\centering
	\begin{tabular}{c c c}
		\begin{minipage}{.25\textwidth}
			\centering
			\noindent \scalebox{0.75}{\begin{forest}
				for tree={grow=south,circle,draw,edge={thick,->,xarrow,color=black},thick,minimum size=12pt,inner sep=0.5pt}
				[,fill=orange,
				[$\scriptstyle\mu$,fill=orange, [,fill=orange,[$\scriptstyle\tau_1$,fill=orange] [$\scriptstyle\tau_2$]] [[$\scriptstyle\gamma_1$] [$\scriptstyle\gamma_2$]]]
				[$\scriptstyle\mu'$ [[$\scriptstyle\tau_3$] [$\scriptstyle\tau_4$]] [[$\scriptstyle\gamma_3$] [$\scriptstyle\gamma_4$]]]
				]
			\end{forest}}
		\end{minipage}& \quad\quad\quad\quad\quad &
		\noindent\begin{minipage}{.25\textwidth}
			\centering
\noindent \scalebox{0.75}{\begin{forest}
				for tree={grow=south,circle,draw,edge={thick,->,xarrow,color=black},thick,minimum size=12pt,inner sep=0.5pt}
				[,fill=orange,
				[$\scriptstyle\mu$,fill=orange, [,fill=orange,[$\scriptstyle\tau_1$,fill=orange] [$\scriptstyle\tau_2$]] [[$\scriptstyle\gamma_1$] [$\scriptstyle\gamma_2$]]]
				[$\scriptstyle\mu'$ [[$\scriptstyle\tau_3$] [$\scriptstyle\tau_4$]] [,pruned]]
				]
			\end{forest}}
		\end{minipage}
	\end{tabular}
	\caption{Example search tree illustrating non-uniform pruning. A search tree before (left) and after (right) pruning is shown. Orange nodes indicate the path of the target leaf $\tau_1$. Leaves $\tau_i$ indicate occurrences of $\tau_1$, $\gamma_i$ indicates occurrences of $\gamma_1$. In the pruned tree, an occurrence of $\tau$ is found more likely in the subtree of $\mu'$ than $\mu$ (through random walks, assuming the immediate path to the pruned node is not taken).}
	\label{fig:nonuniformpruning}
\end{figure}
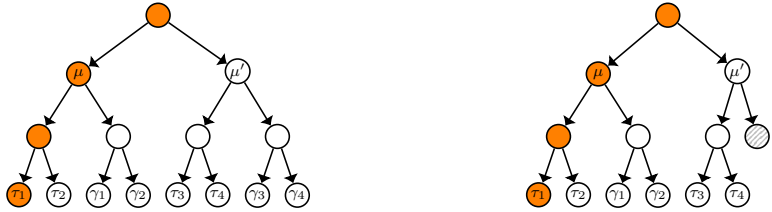

While Algorithm~\ref{alg:random_auto} is able to solve the problem on its own and parallelizes whenever enough random walks are required, it never actually prunes the search tree. 
This means that if it is discovered during random probing that a certain path does not lead to an occurrence of the target leaf, there is no mechanism to prevent making the same bad choices again. 

To rectify this, we want to intersperse the random walks of the tree with breadth-first search. 
Fortunately, probing after breadth-first traversal of an entire level has been performed naturally results in a uniform distribution of leaves again. Indeed,
upon completion of an entire level of breadth-first traversal, probing can also be characterized as starting a random walk at a uniformly random node of that level. 
We probe from a level $k$ by choosing uniformly at random a node $\nu'$ with $|\nu'| = k$ of the search tree that satisfies $\Inv(\nu') = \Inv(\nu)$. Here $\nu$ is the prefix of length $k$ of the vertex sequence corresponding to the target leaf. 
Due to Lemma~\ref{lem:tree_invariant}, the trees rooted in prefixes that contain some occurrence of the target leaf are isomorphic. 
Therefore they yield the same probability for finding an occurrence of the target leaf within them.
The process therefore samples automorphisms with a uniform distribution. Hence, breadth-first traversal can safely be combined with Algorithm~\ref{alg:random_auto}.

When using the breadth-first traversal, we still want to be able to use the automorphism pruning rule $\PruB$. However, this can break uniformity. Assume for example $\nu$ and $\nu'$ are both children of $\mu'$. Assume further they each correspond to a ``bad choice'' on this level of the search tree in the sense that neither of them contains an occurrence of the target leaf. During the algorithm we would not yet know whether these choices are bad, but suppose we find an automorphism mapping one to the other. By contracting them to a single node, we reduce the number of children without a target leaf and thus increase the chance of finding an occurrence of the target leaf in $\mu'$ (as illustrated by Figure~\ref{fig:nonuniformpruning}). 

Our solution to this problem is that whenever we use $\PruB$, we artificially restore uniformity. We do so by introducing \emph{weights} to nodes of the tree, which essentially denote the number of paths represented by a node of the tree. When combining elements of the same level, their weights (represented paths) are combined as well. Later on, when probing for leaves, we take weights into account when sampling random elements. Considering the example again, when we contract the bad choices $\nu$ and $\nu'$ to a single node, and both have weight $1$ to begin with, the remaining node gets weight $2$. The remaining node is then chosen with the same probability as both of the initial nodes together --- hence, keeping the same probability of finding the target leaf in $\mu$.

Since the probability of finding automorphisms is supposed to remain constant, one might wonder why this kind of pruning should be applied at all. The reason is that the work for the breadth-first traversal is reduced: we need to compute less nodes when advancing the breadth-first level, since symmetric nodes are contracted. On later levels, we may be able to throw away nodes (uniformly, since we are performing breadth-first traversal) and thus actually increase the probability of finding target leaves.

We now formalize the notion of weights. We describe this using the following construction: we start by defining internal weights $\overline{w}$ and external weights $w$ for all nodes. The internal weights and external weights of all nodes are initially set to 1, i.e., $w(\nu) := \overline{w}(\nu) := 1$ for all $\nu$. The internal weights are then manipulated by the algorithm. Whenever internal weights are modified, the following formula calculates the corresponding external weights:
\[w(\nu) := \begin{multicases}{2} 1 &\text{ if } \nu = \epsilon\\
\overline{w}(\nu) \cdot w(\nu_1, \dots{}, \nu_{k-1}) & \text{ if } \nu = \nu_1, \dots{}, \nu_k
\end{multicases}\]

\noindent We now modify $\PruB$ into $\PruBW$ by making it update the internal weights in addition to pruning: assume we already have $\varphi$ of $\Aut(G)$ available. For all nodes $\nu$ where $\nu^\varphi$ is not a prefix of the target leaf, $\PruBW(\nu, \nu^\varphi)$ removes $\nu^\varphi$ from the search tree and updates $\overline{w}(\nu) = \overline{w}'(\nu) + \overline{w}'(\nu^\varphi)$, where $\overline{w}'$ are the previous internal weights.

Additionally, we want to formalize the notion that the tree rooted in $\nu^\varphi$ is now \emph{represented} by the tree rooted in $\nu$. We do this by introducing an equivalence relation $\sim$, which we update every time $\PruB$ is executed. Initially, every node represents itself, hence $[\nu]_{\sim} = \{\nu\}$ holds. Note that trivially $|[\nu]_{\sim}| = w(\nu) = 1$ is satisfied initially. We update the relation whenever $\PruBW(\nu, \nu^\varphi)$ is executed. The old relation $\sim'$ is then replaced by $\sim$, which we define in the following. We do so using three states of the search tree: first, the \emph{unpruned tree} $\T(G, \nu)$, which is the initial tree where no pruning rule has been applied. Secondly, there is the \emph{pruned tree before} the operation $\PruB'$ was executed, i.e., $\T(G, \nu)'$ with the relation $\sim'$. Lastly, there is the \emph{pruned tree after} the operation $\PruB'$ has taken place, for which we want to define the relation $\sim$. 
The goal is then to argue inductively that the size of the equivalence class is equal to the external weight of the representative in the pruned trees. 

We stipulate that $\nu^\varphi$, as well as all nodes currently represented by $\nu^\varphi$, are now represented by $\nu$. Formally, we unify the equivalence classes of the root nodes in question, i.e., we set $[\nu]_{\sim} := [\nu]_{\sim'} \cup [\nu^\varphi]_{\sim'}.$

\noindent We extend this definition recursively for all nodes of the tree rooted in $\nu$, i.e., all elements of the unpruned tree rooted in $\nu^\varphi$ need to be represented by some element of $\nu' \in \T(G, \nu)'$. The tree rooted at $\nu^\varphi$ may also represent other trees, which need to be consider. We let $[\nu^\varphi]_{\sim'} = \{\nu^{\varphi_1}, \dots{},\nu^{\varphi_k}\}$, where clearly $\varphi_i \in \langle S \rangle$ for $i \in \{1, \dots{}, k\}$ holds. Since all of these trees may have been pruned differently through applications of $\PruBW$, we refer back to nodes of the unpruned tree $\T(G)$. For every node $\nu'$ in the pruned tree $\T(G, \nu)'$, we find all of the nodes of the unpruned tree that are now represented by it, i.e., 
$
[\nu']_{\sim} := [\nu']_{\sim'} \cup
\{\nu''^{\varphi_i} \; | \; \nu'' \in [\nu']_{\sim'}, i \in \{1, \dots{}, k\} \; | \; \nu'' \in \T(G, \nu)\}
$.
Note that this is well-defined in terms of equivalence relations, since all $\varphi_i$ define bijections between nodes of $\T(G, \nu)$ and $\T(G, \nu^{\varphi_i})$ (Lemma~\ref{lem:auto_tree_correspondence1}). All other nodes of the pruned tree keep their equivalence classes of $\sim'$, which is correct since their weight is unaltered as well.

We now argue that the external weight of a remaining node represents the size of its respective equivalence class:

\begin{lemma} \label{lem:weight_class_correspondence} Let $A$ be a sequence of applications of $\PruBW$ to a search tree $\T(G)$. Assume that applying $A$ results in $\T(G)'$ with external weight function $w'$ and $\sim'$ is the corresponding equivalence relation (as described previously). Then, it holds that $|[\nu]_{\sim'}| = w'(\nu)$.
\end{lemma}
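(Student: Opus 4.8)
The plan is to prove the claim by induction on the length of the sequence $A$ of applications of $\PruBW$. For the base case, $A$ is empty, so $\T(G)' = \T(G)$, every node has external weight $1$ (since all internal weights are $1$ and the product formula gives $w(\nu) = 1$ for all $\nu$), and $[\nu]_{\sim} = \{\nu\}$ for all $\nu$, so $|[\nu]_{\sim}| = 1 = w'(\nu)$ holds trivially.

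For the inductive step, suppose $A = A'.\PruBW(\mu, \mu^\varphi)$ where the claim already holds for $A'$, producing $\T(G)'$ with weight function $w'$ and relation $\sim'$ satisfying $|[\nu]_{\sim'}| = w'(\nu)$ for all remaining nodes. After executing $\PruBW(\mu, \mu^\varphi)$ we obtain $\T(G)''$ with updated internal weight $\overline{w}''(\mu) = \overline{w}'(\mu) + \overline{w}'(\mu^\varphi)$, new external weights $w''$, and new relation $\sim''$. I would then verify $|[\nu]_{\sim''}| = w''(\nu)$ by a case distinction on which node $\nu$ is. First, for nodes $\nu$ that lie outside the subtree rooted at $\mu$ and are not on the path to $\mu^\varphi$: their internal weight is unchanged, hence their external weight is unchanged, and their equivalence class is unchanged, so the claim carries over from $\sim'$. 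Second, the node $\mu$ itself: here $w''(\mu) = \overline{w}''(\mu) \cdot w''(\text{parent of }\mu) = (\overline{w}'(\mu) + \overline{w}'(\mu^\varphi)) \cdot w'(\text{parent})$, and since the parent's external weight is unchanged, this equals $w'(\mu) + \overline{w}'(\mu^\varphi)\cdot w'(\text{parent})$. Meanwhile $[\mu]_{\sim''} = [\mu]_{\sim'} \cup [\mu^\varphi]_{\sim'}$, a disjoint union (the two classes were distinct in $\sim'$ since $\mu^\varphi$ was a node distinct from $\mu$ still present in $\T(G)'$), so $|[\mu]_{\sim''}| = w'(\mu) + w'(\mu^\varphi)$. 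So I need $w'(\mu^\varphi) = \overline{w}'(\mu^\varphi) \cdot w'(\text{parent of }\mu)$; this follows because $\mu$ and $\mu^\varphi$ are siblings (both children of the same node, in the example $\mu'$) — wait, in general $\PruB$ is applied to nodes $\nu, \nu^\varphi$ that need not be siblings, so I should instead argue that $\mu^\varphi$ and $\mu$ have parents with equal external weight, which holds because $\varphi$ maps the path to $\mu$'s parent onto the path to $\mu^\varphi$'s parent and all those ancestors are still unpruned with weights defined by products along isomorphic paths. The cleanest route is: since $\varphi \in \langle S \rangle \subseteq \Aut(G)$, by Corollary~\ref{lem:auto_tree_correspondence1} the map $\nu'' \mapsto \nu''^\varphi$ is a tree isomorphism $\T(G,\mu) \to \T(G,\mu^\varphi)$, and one shows inductively that symmetric positions carry equal internal weight prior to this application, hence equal external weight after accounting for the common ancestor prefix (whose external weight at the meeting point is shared). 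Third, for descendants $\nu'$ of $\mu$ inside $\T(G,\mu)$: the recursive definition sets $[\nu']_{\sim''} = [\nu']_{\sim'} \cup \{\nu''^{\varphi_i} : \nu'' \in [\nu']_{\sim'}\cap\T(G,\mu), i \in \{1,\dots,k\}\}$ where $[\mu^\varphi]_{\sim'} = \{\mu^{\varphi_1},\dots,\mu^{\varphi_k}\}$; since the $\varphi_i$ are bijections on the relevant subtrees the added set has size $|[\nu']_{\sim'}\cap\T(G,\mu)| \cdot k$, and I would track that the internal weight of $\nu'$ is unchanged while only the external weight of its ancestor $\mu$ changed by the factor matching $k+1 = (\overline{w}'(\mu)+\overline{w}'(\mu^\varphi))/\overline{w}'(\mu)$ — this requires knowing $\overline{w}'(\mu^\varphi) = k\cdot\overline{w}'(\mu)$, i.e., that $\mu^\varphi$ represents exactly $k$ copies of the tree shape of $\mu$, which is again the inductive hypothesis $|[\mu^\varphi]_{\sim'}| = w'(\mu^\varphi)$ combined with the weight bookkeeping.

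The main obstacle I expect is the third case: carefully setting up the recursive extension of $\sim$ to descendants so that the size computation goes through, in particular handling the possibility that $\mu^\varphi$ itself already represents several trees (the $\nu^{\varphi_1},\dots,\nu^{\varphi_k}$) which may have been pruned differently, and showing that restricting to nodes $\nu'' \in \T(G,\mu)$ in the definition keeps the union disjoint and of the right cardinality. The key lemma to isolate is that for any node $\nu'$ in $\T(G,\mu)'$, after the update $w''(\nu') = w'(\nu') \cdot \frac{\overline{w}'(\mu)+\overline{w}'(\mu^\varphi)}{\overline{w}'(\mu)}$, because the only changed internal weight on the path from the root to $\nu'$ is at $\mu$; pairing this with the disjoint-union size count for $[\nu']_{\sim''}$ closes the induction. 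I would present the argument by first proving this "uniform rescaling of the subtree at $\mu$" observation from the external-weight product formula, then doing the three-way case split above.
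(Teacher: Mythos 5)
Your overall strategy is the same as the paper's: induction over the sequence of $\PruBW$ applications, a separate treatment of the surviving node $\mu$ of the merge, and then its descendants via the observation that a descendant's class factors into (representatives inside $\T(G,\mu)$) times (the $w(\mu)$ copies of that subtree) --- your ``uniform rescaling of the subtree at $\mu$'' is exactly the paper's inner induction, and your handling of untouched nodes and of the disjointness of $[\mu]_{\sim'}$ and $[\mu^\varphi]_{\sim'}$ is fine.

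The gap is in the step you yourself flagged, namely the identity $w''(\mu)=w'(\mu)+w'(\mu^\varphi)$. You propose to derive it from the claim that ``symmetric positions carry equal internal weight prior to this application,'' so that the parents of $\mu$ and $\mu^\varphi$ have equal external weight. That claim is false for an arbitrary sequence of $\PruBW$ applications, because earlier prunings are precisely what destroys this symmetry of the weights. Concretely, let $a,b,c$ be three pairwise symmetric nodes on level $1$ and first apply $\PruBW(a,b)$, so $\overline{w}(a)=2$ while the symmetric node $c$ keeps $\overline{w}(c)=1$; then apply $\PruBW(a_1,c_1)$, where $a_1$ is a child of $a$ and $c_1=a_1^\varphi$ is the corresponding child of $c$ (with the target leaf placed so that $c_1$ is not one of its prefixes). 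The new class is $[a_1]_{\sim}=\{a_1,b_1\}\cup\{c_1\}$ of size $3$, but the new weight is $w(a_1)=(1+1)\cdot w(a)=4$. So your sub-lemma, and with it your case~2 and the matching growth factors in case~3 (in particular the equation $\overline{w}'(\mu^\varphi)=k\cdot\overline{w}'(\mu)$, which also conflates internal and external weights), fails as stated. What is actually needed is that, at the moment $\PruBW(\mu,\mu^\varphi)$ is applied, the parents of $\mu$ and $\mu^\varphi$ carry equal external weight --- trivially true for siblings, and ensured by the level-by-level way the algorithm applies the rule during breadth-first traversal --- and this must enter as a property of the admissible sequences $A$, not as a consequence of symmetry of the already-pruned weighted tree. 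To be fair, the paper's own proof is silent at exactly this point (it simply asserts $w'(\nu)+w'(\nu^\varphi)=w(\nu)$), but your proposal replaces that tacit assumption by a symmetry claim that the two-step example above refutes, so as written the induction does not close.
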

\begin{proof} We argue the claim by induction over the sequence $A$. Initially, the claim is true since all weights are $1$ and every node represents itself in $\sim$.

Let us now argue that when applying some $\PruBW(\nu, \nu^\varphi)$, the invariant remains valid. For the root node this is easy to see: by induction $|[\nu]_{\sim'}| = w'(\nu)$ and $|[\nu^\varphi]_{\sim'}| = w'(\nu^\varphi)$ hold, which in turn shows $|[\nu]_{\sim}| = |[\nu]_{\sim'} \cup [\nu^\varphi]_{\sim'}| = w'(\nu) + w'(\nu^\varphi) = w(\nu)$.

\noindent The third equality holds since $\varphi \neq \id$ is required by definition. We now show this for all $\nu' \in \T(G, \nu)'$ by induction. Let $\nu' = \nu'_1, \dots{}, \nu'_k$ and $\mu = \nu'_1, \dots{}, \nu'_{k-1}$. We know that $w(\mu) = |[\mu]_{\sim}|$ holds, i.e., the statement is true for the parent of $\nu'$. 

It can be easily seen that set of the elements $\mu' \in \T(G, \nu)$ with $\mu' \in [\mu]_\sim$, i.e., elements represented by $\mu$ in the subtree of $\nu$, has not been altered. Hence, we can rewrite $w(\mu) = |[\mu]_\sim| = |\{\mu' \in [\mu]_\sim \; | \; \mu' \in \T(G, \nu)\}| \cdot w(\nu).$

\noindent The internal weight $\overline{w}$ of $\nu'$ is only changed whenever $\PruBW$ is directly applied on $\nu'$. This means that in the unpruned tree, $\nu'$ represents $\overline{w}(\nu')$ many elements in $\T(G, \mu)$. We can conclude
$
|[\nu']_{\sim}| = |\{\nu'' \in [\nu']_{\sim'} \; | \; \nu'' \in \T(G, \nu)\}| \cdot w(\nu)
= \overline{w}(\nu') \cdot |\{\mu' \in [\mu]_\sim \; | \; \mu' \in \T(G, \nu)\}| \cdot w(\nu) 
= \overline{w}(\nu') \cdot w(\mu),
$
which proves our claim.
\end{proof}

\noindent Since we weigh each equivalence class of nodes with its size in the pruned tree, it does not matter up to $\sim$ whether we perform random walks on the pruned search tree or the unpruned tree. The distributions of equivalence classes are indistinguishable. Let us now argue why this suffices for the correctness of our algorithm. By looking carefully at the previous discussion, we can observe that all elements of an equivalence class can be mapped to each other through elements of $\langle S \rangle$. This implies that given a leaf $\nu$, all elements represented by $\nu$ are generated by $S$ if and only if $\nu$ is generated by $S$. In terms of Lemma~\ref{lem:group_abort_criterion2}, the automorphism derived from a leaf sampled uniformly at random from $\nu$ therefore has the same chance of sifting through the structure as one of $\nu^\varphi$. Therefore, by using a modified version of Lemma~\ref{lem:group_abort_criterion2} as the abort criterion, we can consider sampled weighted nodes of the pruned tree instead of proper uniform random nodes of the unpruned tree:

\begin{lemma} \label{lem:group_abort_criterion3} Let $G$ be a graph, $B$ a base, $S$ a set of permutations, $\T(G)'$ the search tree resulting from repeated application of $\PruBW$ with elements of $S$ and $\tau \in \T(G)'$ the target leaf. Furthermore, let $\nu$ be a leaf drawn from $\T(G)'$ with weight $w(\nu)$ where $\nu^{-1} \cdot \tau = \varphi \in \Aut(G)$. If $\langle S \rangle \neq \Aut(G)$ holds, then the probability that $\varphi$ does not sift through the Schreier structure defined by $B$ and $S$ is at least $\frac{1}{2}$.
\end{lemma}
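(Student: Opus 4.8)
The plan is to reduce the statement to the situation of Lemma~\ref{lem:group_abort_criterion2} by \emph{coupling} the weighted draw of a leaf from the pruned tree $\T(G)'$ with a uniformly random draw of a leaf from the unpruned tree $\T(G)$, using Lemma~\ref{lem:weight_class_correspondence} to control the weights.

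First I would set up the coupling. By Lemma~\ref{lem:weight_class_correspondence}, every leaf $\nu$ of $\T(G)'$ satisfies $w(\nu)=|[\nu]_\sim|$, and the classes $[\nu]_\sim$ partition the leaves of $\T(G)$ (this is what the recursive definition of $\sim$ is designed to ensure). Hence drawing a leaf $\nu$ of $\T(G)'$ with probability proportional to $w(\nu)$ is the same experiment as drawing a leaf $\nu''$ of $\T(G)$ uniformly at random and then replacing it by the representative $\nu$ of its class $[\nu'']_\sim=[\nu]_\sim$. Next I would check that the conditioning $\nu^{-1}\cdot\tau\in\Aut(G)$ is compatible with this coupling: every $\nu''\in[\nu]_\sim$ is mapped to $\nu$ by an element of $\langle S\rangle\subseteq\Aut(G)$, and composing with automorphisms preserves the property of being an occurrence of $\tau$; therefore $\nu$ is an occurrence of $\tau$ if and only if every member of $[\nu]_\sim$ is. Thus conditioning on ``$\nu$ is an occurrence of $\tau$'' in the pruned experiment is the same event as conditioning on ``$\nu''$ is an occurrence of $\tau$'' in the unpruned experiment, and under this conditioning $\nu''$ is a uniformly random occurrence of $\tau$ in $\T(G)$; by Lemma~\ref{lem:leaf_auto_correspondence1} the automorphism $\psi:=\nu''^{-1}\cdot\tau$ is then uniformly distributed in $\Aut(G)$.

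It then remains to relate the sifting behaviour of $\varphi=\nu^{-1}\cdot\tau$ (the automorphism actually produced, namely that of the representative) to that of $\psi$. Since $\nu''\in[\nu]_\sim$, there is $\sigma\in\langle S\rangle$ with $\nu=\nu''^{\sigma}$, and because occurrences of $\tau$ are bases of $\Aut(G)$ (see the proof of Lemma~\ref{lem:leaf_auto_correspondence0}) this forces $\varphi$ and $\psi$ to lie in the same coset of $\langle S\rangle$. Now recall that any element sifting successfully through the Schreier structure defined by $B$ and $S$ is necessarily contained in $\langle S\rangle$. Consequently, whenever $\psi\notin\langle S\rangle$ we also have $\varphi\notin\langle S\rangle$, and then $\varphi$ does not sift successfully. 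Therefore
\[
\Pr[\varphi\text{ does not sift}]\ \geq\ \Pr[\psi\notin\langle S\rangle]\ =\ 1-\tfrac{1}{[\Aut(G):\langle S\rangle]}\ \geq\ \tfrac12,
\]
the last inequality because $\langle S\rangle\neq\Aut(G)$ is a proper subgroup and hence has index at least $2$; this is precisely the estimate underlying Lemma~\ref{lem:group_abort_criterion2}.

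I expect the main obstacle to be the bookkeeping of the coupling: verifying from the recursive definition of $\sim$ that the classes of the surviving leaves really do partition the leaves of the unpruned tree without loss or double counting, and that the external weights track class sizes throughout — Lemma~\ref{lem:weight_class_correspondence} does most of this work, so the remaining effort is mostly to phrase the sampling correspondence cleanly. A secondary subtlety worth flagging explicitly is that one should \emph{not} try to argue ``$\varphi$ sifts if and only if $\psi$ sifts'': when $\psi\in\langle S\rangle$ the representative's residue can behave differently, but that case lies entirely inside the favourable half and so does not affect the bound, which is why restricting attention to the event $\psi\notin\langle S\rangle$ is the right move.
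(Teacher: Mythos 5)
Your proposal is correct and follows essentially the same route as the paper's proof: use Lemma~\ref{lem:weight_class_correspondence} to identify weighted sampling from the pruned tree with uniform sampling from the unpruned tree, observe that all leaves in an equivalence class are related by elements of $\langle S\rangle$ so their derived automorphisms lie in a common coset of $\langle S\rangle$, and conclude with the $\geq \frac{1}{2}$ bound of Lemma~\ref{lem:group_abort_criterion2}. Your explicit reduction to the event $\psi\notin\langle S\rangle$ (rather than asserting identical sifting behaviour across a class) is a slightly more careful phrasing of the same argument, not a different approach.
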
 
\begin{proof} 
Since $\nu$ is an occurrence of the target leaf, we can require $\nu^{\varphi} = \tau$. From the previous discussion, we know that external weights of remaining leaves determine the amount of leaves in the unpruned tree represented by them (Lemma~\ref{lem:weight_class_correspondence}) and that no other remaining leaves represent them. It therefore suffices to argue that the derivable automorphisms of leaves of $\T(G)$ represented by $\nu$ all have the same chance of being generated by $\langle S \rangle$. Since leaves are represented by $\nu$ if they were pruned using $\PruBW$ -- which by assumption can only use elements of $S$ -- all of them can be derived by applying some $\varphi' \in \langle S \rangle$ to $\nu$: hence, $\varphi \in \langle S \rangle \iff \varphi \cdot \varphi' \in \langle S \rangle$.
\end{proof}

\noindent The actual solver will --- in addition to finding a generating set in the first place --- still need to fill the Schreier structure sufficiently. This can however only require more elements and thus increases the chance of elements not sifting successfully. Hence, it even decreases error probabilities.

\section{Implementation and Heuristics} \label{sec:implementation}
We use the tools of the previous section to construct the parallel graph automorphism solver \dejavu{}. We start by describing the high-level structure of \dejavu{}.
It consists of $4$ modes of operation, between which it continuously switches. 
The solver decides how to switch between the modes using heuristics, which are based on a cost estimation.

The solver begins by trying to sample a good cell selector (in parallel). Then, \emph{base-aligned} automorphism search is performed (Section~\ref{subsec:autosample}), followed by breadth-first search (Section~\ref{subsec:bfs}), followed by \emph{level} automorphism search (see Section~\ref{subsec:autosample}). 
At any point, depending on the cost estimation, the solver can decide to go to a preceding mode again.

\noindent The tool is written in \texttt{C++}. Threads of the \texttt{C++} standard library are used for parallelization. It contains some modified code of the \nauty{} / \Traces{} distribution (available at \cite{nautyTracesweb}), namely specialized versions of the Schreier-Sims implementation. The source code is available at \cite{dejavuweb}.

The refinement of \dejavu{} is a highly-engineered version of color refinement \cite{McKay201494}, following the implementation of \Traces{} closely. 
We also exploit the blueprint technique as described in \cite{alenextoappear},
extending the technique to also cache cell selector choices for subsequent branches.

Next, we provide further details on practical and conceptual aspects of the solver. 

\subsection{Breadth-first Traversal and Trace Deviation Sets} \label{subsec:bfs} 
The work of the breadth-first traversal is shared between threads through lock-free queues. A master thread adds all of the elements which have to be computed to a queue. Threads then dequeue a chunk of work, compute the elements, and report their results back to the master thread through another queue.
 In order to minimize overhead, large chunks of work are enqueued and dequeued from the queue rather than single elements.
Furthermore, \dejavu{} uses automorphism pruning when performing breadth-first search as described in Section~\ref{subsec:uniformpruning} while filling up the queue (not enqueuing multiple elements known to be isomorphic).

During breadth-first traversal, we make use of a \emph{trace} invariant, as introduced by \Traces{} and described in \cite{McKay201494}.
Furthermore, we introduce the \emph{trace deviation set} technique.
The approach is related to the special automorphism algorithm of \Traces{} \cite{McKay201494} as well as the trace deviation trees used in~\cite{alenextoappear}.

\begin{figure}
	\centering
	\noindent \scalebox{0.75}{\begin{forest}
		for tree={grow=south,circle,draw,edge={thick,->,xarrow,black},thick,minimum size=12pt,inner sep=0.5pt}
		[,fill=orange,
		[,fill=orange[,fill=orange,] [] [$\scriptscriptstyle d1$,pruned] [$\scriptscriptstyle d2$,pruned]]
		[,pruned,[$\scriptscriptstyle d3$,pruned]]
		[[] [] [$\scriptscriptstyle d1$,pruned] [$\scriptscriptstyle d2$,pruned]]
		[,pruned, [] [] [] [$\scriptscriptstyle d1$,pruned]]
		]
	\end{forest}}
\caption{Potential search tree traversed when using trace deviation sets. Orange indicates base nodes and gray indicates pruned nodes. Children of pruned nodes are of course eventually pruned as well.}
\label{fig:deviation_search}
\end{figure}
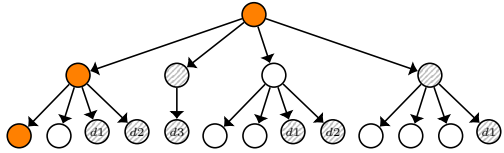

During breadth-first traversal, we also keep a \emph{trace deviation set}. 
The idea of this pruning technique is related to the special automorphism algorithm of \Traces{} (see \cite{McKay201494}) and the trace deviation tree technique of \cite{alenextoappear}. 
We present the idea in the following.

To describe the technique, we first define a new node invariant, which we call the \emph{deviation value} $\Dev_{\Inv}\colon V^* \to \mathbb{N}^2 \cup \{\bot\}$. 
Consider a fixed trace $\Inv(\tau)$, which for our purposes will be the trace invariant of the target leaf $\tau$. 
The deviation value $\Dev_{\Inv}(\nu)$ for a node $\nu$ is then defined as a tuple of the first position and the corresponding value in the trace $\Inv(\nu)$ that is different from $\Inv(\tau)$. 
If there are no differences, we set the deviation value to $\bot$ denoting ``no deviation''. 
Since the deviation value is a function of the invariant computed up until an isomorphism-invariant point, it is also naturally invariant under isomorphism.

Consider a node $\mu$ in the search tree. 
The crucial observation is that in our algorithm, we can also use the set of deviation values of its children as an invariant for $\mu$ itself. 
Assume $\nu_1, \dots{}, \nu_k$ are children of $\mu$ and none of the subtrees rooted in the children has been pruned through $\PruA$. Then, $D(\mu) := \{\Dev_{\Inv}(\nu_1), \dots{}, \Dev_{\Inv}(\nu_k)\}$, the trace deviation set of $\mu$, can be used as an invariant for $\mu$: we claim that for any other node $\mu'$ with children $\nu_1', \dots{}, \nu_k'$, it must hold that
$
	D(\mu) = \{\Dev_{\Inv}(\nu_1), \dots{}, \Dev_{\Inv}(\nu_k)\} = \{\Dev_{\Inv}(\nu_1'), \dots{}, \Dev_{\Inv}(\nu_k')\} = D(\mu')
$
whenever $\mu$ and $\mu'$ are isomorphic. 
If no pruning has taken place, this is easy to see since the branches are isomorphic by assumption, immediately implying that branches must contain the same invariant values. 
But if $\PruBW$ is applied, no invariant values can be removed either: since pruned nodes are isomorphic to remaining nodes, they, again immediately by definition, must contain the same invariant values. 

When advancing in a breadth-first manner, the aforementioned requirements are guaranteed to be satisfied: no $\PruA$ has taken place on the level that is currently being pruned. 
Furthermore, while computing the level, the set of deviation values is automatically calculated anyway: whenever we observe that a node $\nu$ below $\mu$ deviates from the desired invariant $\Inv(\nu)$ and should be pruned using $\PruA$, we already have enough information to derive $\Dev_{\Inv}(\nu)$.

These observations are specifically exploited as follows: first, all children of the base node $\tau'$ (which belongs to the path on the way to the target leaf $\tau$) are computed. 
If nodes deviate from the trace, their deviation values are recorded into a set, i.e., we calculate the trace deviation set $D(\tau')$. 
The idea is that if a node is (supposedly) isomorphic to the base node $\tau'$, then, for its (supposedly) isomorphic children, it must deviate from the trace in the same manner at the same position. 
Hence, for all other parent nodes $\mu$, we also keep track of $D(\mu)$ when calculating their children. 
Whenever we discover a new element of $D(\mu)$, we check whether the set equivalence $D(\tau') = D(\mu)$ can still be satisfied. 
If not, $\mu$ can be pruned immediately, without the necessity to calculate all its children.

For example, assume that we calculated deviation values $\{d_1, d_2\}$ for the base node (illustrated in Figure~\ref{fig:deviation_search}). 
From the previous discussion, it follows that we can immediately prune all nodes that produce a value other than $\{d_1, d_2\}$. 
We can also prune nodes that do not produce all of the deviation values. 
If for example a value $d_3 \notin \{d_1, d_2\}$ is encountered, the parent node can immediately be removed from the tree.

A crucial point is that pruning through trace deviation sets has negligible cost: children of the base node always have to be computed, and the trace deviation does not necessitate more calculation than is done for that particular node anyway. 
We are still able to fully use the early-out capabilities of the trace invariant. 
In the parallel implementation, only the initial recording of deviation values of base nodes has to be synchronized. 
In the implementation we do however, depending on a heuristic, use a slight variation: to make deviation values more distinct, it is sometimes beneficial to not use the early-out immediately. 
Instead, for a fixed constant $k$, color refinement is continued past the deviation for $k$ more cells, accumulating more information for the deviation value. 
The trade-off is as follows: if $k$ becomes larger, the early-out in color refinement is taken later, but deviation values become more distinct. 
In practice, this trades per-node cost for the number of nodes in the search tree. 
However, in our experiments we observed that even for very small $k$ (we use $k = 5$), node reduction can be substantial -- while not increasing per-node cost by a significant amount.

\subsection{Automorphism Sampling} \label{subsec:autosample}
\begin{figure}
\centering
\scalebox{0.7}{
\includegraphics[clip]{./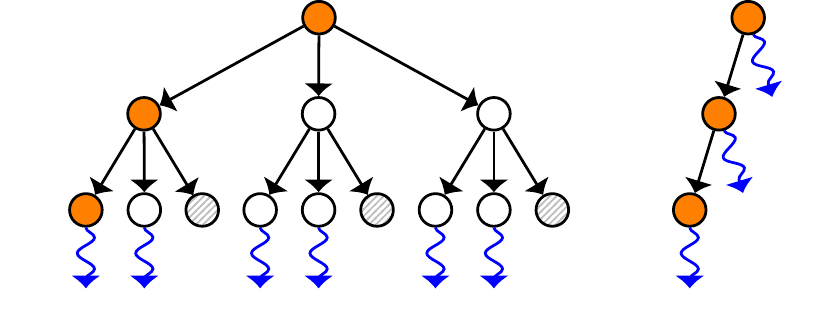}}
    \caption{Level search (left) and base-aligned search (right). Squiggly lines indicate where random walks originate, orange indicates base nodes and gray nodes are pruned by breadth-first traversal.}
    \label{fig:search_styles}
\end{figure}   
A central aspect when sampling automorphisms is whether we can guarantee that the resulting automorphisms are distributed uniformly in the group or not. If the distribution is uniform, they count towards the probabilistic abort criterion of the solver. 
The solver uses two approaches, namely \emph{base-aligned search} (generally non-uniform) and \emph{level search} (uniform).

Level search is essentially sampling as described in Section~\ref{sec:randomir} (see Figure~\ref{fig:search_styles}, left). Walks are initiated from the remaining nodes of the search tree at the current breadth-first level, and when drawing the initial node, weights are accounted for to make the resulting distribution of automorphisms uniform.
Crucially, in this mode, \dejavu{} usually stores additional target leaves, which is proven to result in an exponential speedup in the worst-case \cite{DBLP:conf/icalp/AndersS21, alenextoappear}. 

Base-aligned search is designed to find as many automorphisms as possible with minimal effort. 
This typically entails giving up on uniformity. 
Base-aligned search initiates random walks from a base point of a given strategy (see Figure~\ref{fig:search_styles}, right).
Whenever finding automorphisms from a certain base point is deemed hard, search is advanced to the next base point. 
As a side effect, this handles inherently easy graphs efficiently: whenever color refinement already determines the orbit partition, base-aligned search finds all automorphisms and is even able to terminate search deterministically.


\section{Practical Performance} \label{sec:performance}
We provide benchmark results corroborating that \dejavu{} performs competitively on a large variety of graphs while scaling with the use of more threads. 
\subsection{Benchmarks} \label{subsec:benchmarks}
All benchmarks were run on an Intel Core i7 9700K (8 cores) processor with 16GB of RAM and Ubuntu 19.04. All graphs were randomly permuted, but every solver was given the same permuted version of a graph. All runtimes are measured without parsing the input. 

The benchmarks include most sets from \cite{nautyTracesweb}, which is the de facto standard when it comes to symmetry computation. 
We extended two of the sets to larger instance sizes. 
The respective graphs can be found in \cite{dejavuweb}. 
We should point out that for random trees and pipe graphs \Traces{} benefits from specialized code that is not implemented in \dejavu{}.

The (user-definable) error bound for \dejavu{} was set to $1\%$. A $1\%$ error bound means that with at most that probability at least $1$ generator of the generating set is missing. 
It can be proven however that the probability of 
missing at least $2$ generators then only has a probability of at most $\frac{1}{2}\%$, missing at least $3$ only at most $\frac{1}{2^2}\%$ and so forth (the argument for this is similar to the proof of Lemma~\ref{lem:error:prob:of:algo} involving the index of the subgroup found).
Hence, even if errors occur, it is highly likely only a small part of the generating set is missing.

Actual error probabilities are even lower. Due to the one-sided nature of the error, on asymmetric graphs \dejavu{} can not err (e.g. most random regular graphs, random graphs, multipedes, \texttt{latin-sw}). 
Secondly, on many graph families \dejavu{} can invoke a deterministic criterion for termination (e.g. for most of \texttt{rantree}, \texttt{hypercubes}, \texttt{dac}, \texttt{lattice}, complete graphs, \texttt{tran}). 
This means for the majority of the benchmarks errors can not be observed.

\begin{figure}
	\centering
	\includegraphics[scale=0.9]{./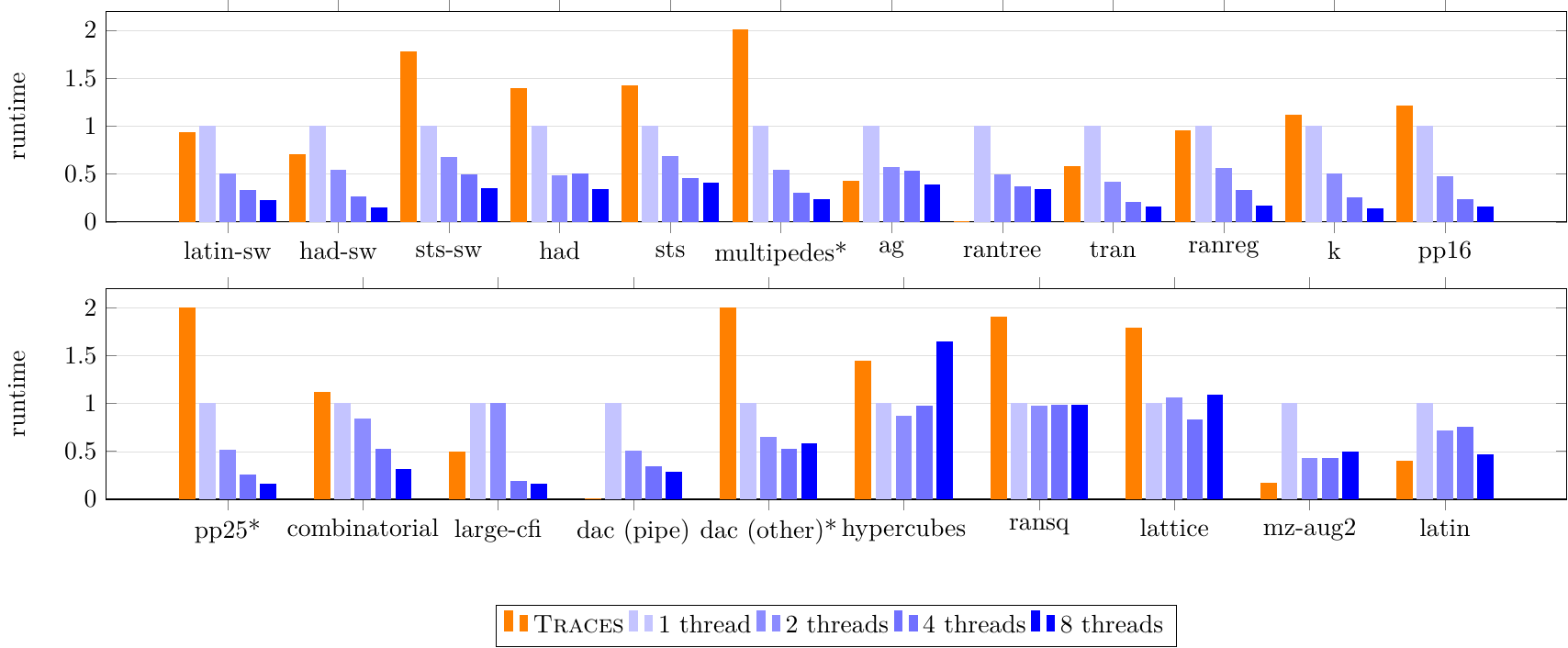}
\caption{Relative runtimes of benchmark sets for \Traces{} and \dejavu{} (using the respective number of threads). For sets marked with *, we capped the runtime of \Traces{} at $2$.} \label{fig:scaling}
\end{figure}

\newcommand{\figureScaleS}{0.78}

The benchmarks corroborate that through the use of parallelism, we are able to achieve significant speedup over state-of-the-art tools in a domain that so far has been exclusively dominated by sequential solvers (see Figure~\ref{fig:scaling}). Specifically, we do so on the particular, representative benchmark suite state-of-the-art solvers are tuned to solve.

Overall benchmarks show that \dejavu{} with a single thread performs competitively with \Traces{}
(on $14$ benchmark sets \dejavu{} is faster, while \Traces{} is faster on the other $10$ sets).
Using $8$ threads, \dejavu{} outperforms \Traces{} on most sets (on $19$ out of $24$ sets).
Additionally, on \texttt{lattice}, \dejavu{} performs better than \Traces{} for larger instances.
Figure~\ref{fig:first_experiments} and Figure~\ref{fig:all_experiments} shows detailed plots for the tested benchmark sets.
Note that the random Erd\H{o}s-Rényi graphs tested in Figure~\ref{fig:first_experiments} exhibit no scaling whatsoever, as they are entirely solved by color refinement for larger instances. For the sake of clarity, with the exception of \texttt{ransq}, they thus do not appear in Figure~\ref{fig:scaling}.

\pgfplotsset{grid=both,major grid style={black!20},minor grid style={draw=none},minor tick style={draw=black!20}}

\newcommand{\figureScale}{0.5}
\newcommand{\figureScaleX}{0.8}

\begin{figure}
	\begin{subfigure}{.45\textwidth}
	\centering
\includegraphics[scale=\figureScaleS]{./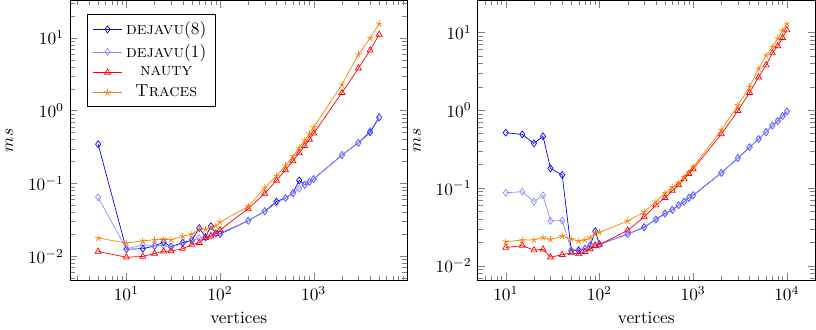}
\caption{Random graphs with edge probability $\frac{1}{2}$ (left) and $\frac{1}{10}$ (right).\\\hspace{1cm}} \label{fig:bench1}
\end{subfigure}\quad\quad
\begin{subfigure}{.45\textwidth}
	\centering
	\includegraphics[scale=\figureScaleS]{./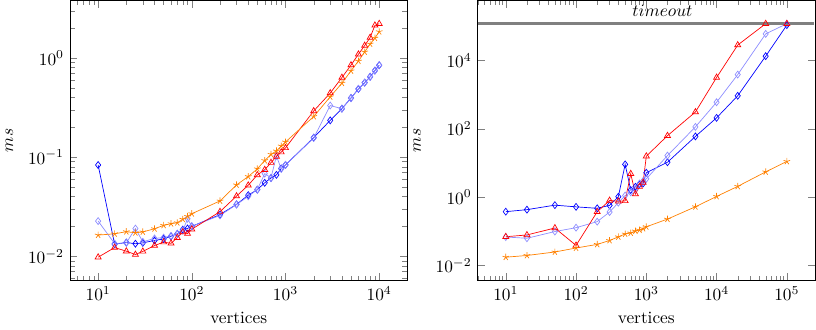}
\caption{Random graphs with edge probability $\frac{1}{\sqrt{n}}$ (left) and random trees (right), timeout is $120$ seconds.} \label{fig:bench3}
\end{subfigure}
\centering
\caption{Detailed plots for benchmarks.} \label{fig:first_experiments}
\end{figure}

\begin{figure}[p!]
\begin{subfigure}{.45\textwidth}
	\centering
	\includegraphics[scale=\figureScaleS]{./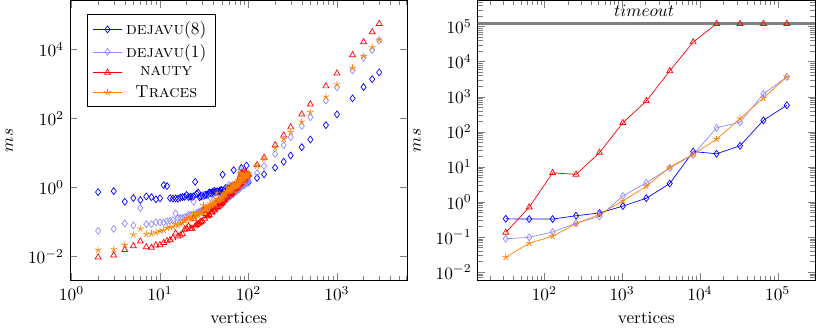}
\caption{Complete graphs (\texttt{k}, left) and random 6-regular graphs (\texttt{ranreg}, right), timeout is $120$ seconds.} \label{fig:bench4}
\end{subfigure}\quad\quad
\begin{subfigure}{.45\textwidth}
	\centering
	\includegraphics[scale=\figureScaleS]{./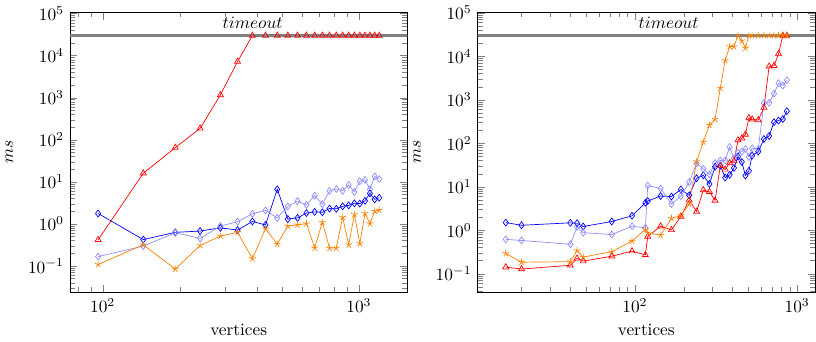}
	\caption{Miyazaki graphs (\texttt{mz-aug2}) and shrunken multipedes (very small instances), timeout is $30$ seconds.} \label{fig:bench5}
\end{subfigure}
\begin{subfigure}{.45\textwidth}
	\centering
	\includegraphics[scale=\figureScaleS]{./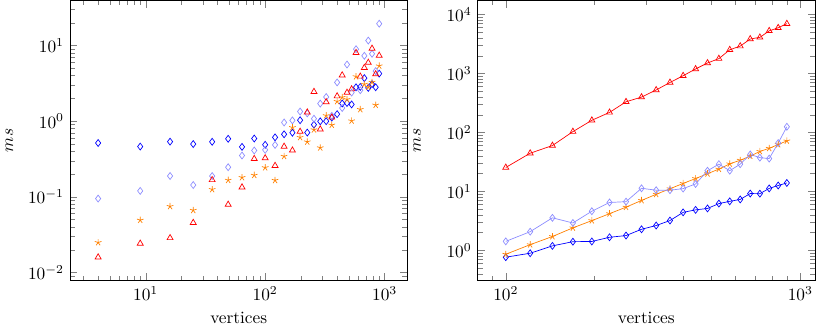}
\caption{Latin squares (\texttt{latin}, left) and switched edge variants (\texttt{latin-sw}, right).} \label{fig:bench7}
\end{subfigure}\quad\quad
\begin{subfigure}{.45\textwidth}
	\centering
	\includegraphics[scale=\figureScaleS]{./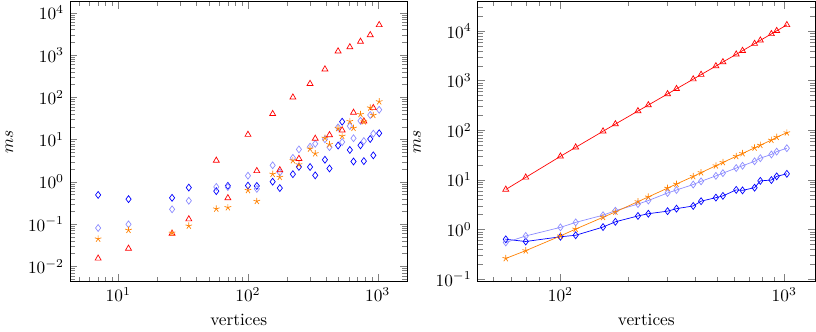}
\caption{Steiner triple systems (\texttt{sts}, left) and switched edge variants (\texttt{sts-sw}, right).} \label{fig:bench8}
\end{subfigure}
\begin{subfigure}{.45\textwidth}
	\centering
	\includegraphics[scale=\figureScaleS]{./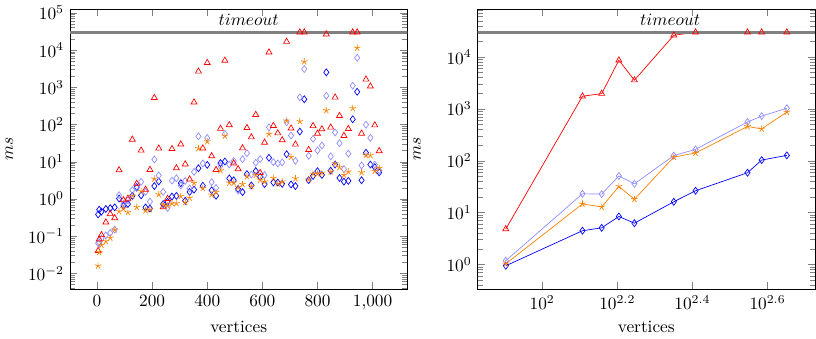}
\caption{Hadamard matrices (\texttt{had}, left) and switched edge variants (\texttt{had-sw}, right), timeout is $30$ seconds.} \label{fig:bench9}
\end{subfigure}\quad\quad
\begin{subfigure}{.45\textwidth}
	\centering
\includegraphics[scale=\figureScaleS]{./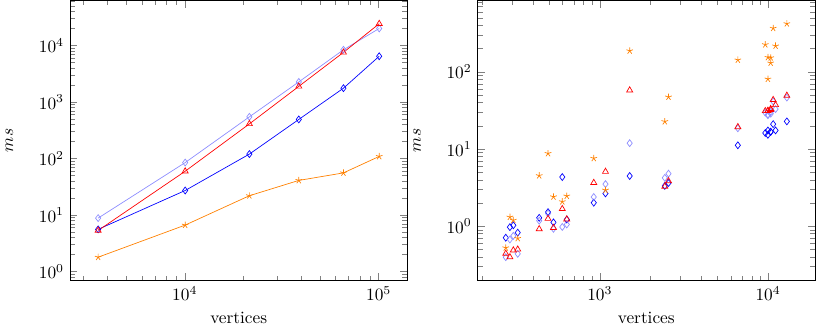}
\caption{Model graphs of CNF formulas (\texttt{dac}), split into the ``pipe'' graphs (left) and the remainder of the set (right).} \label{fig:bench10}
\end{subfigure}
\begin{subfigure}{.45\textwidth}
	\centering
	\includegraphics[scale=\figureScaleS]{./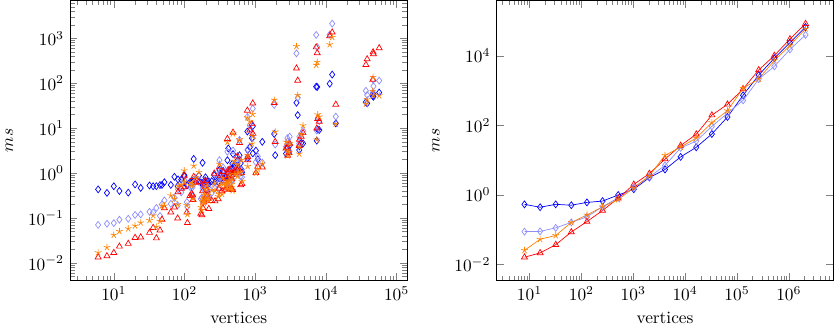}
\caption{Vertex transitive graphs (\texttt{tran}, left) and hypercubes (\texttt{hypercubes}).} \label{fig:bench11}
\end{subfigure}\quad\quad
\begin{subfigure}{.45\textwidth}
	\centering
	\includegraphics[scale=\figureScaleS]{./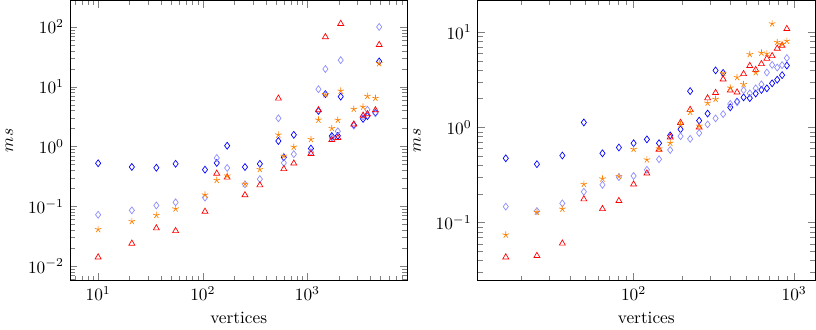}
	\caption{Affine geometry graphs (\texttt{ag}) and lattice graphs (\texttt{lattice}).} \label{fig:bench6}
\end{subfigure}
\begin{subfigure}{.45\textwidth}
	\centering
	\includegraphics[scale=\figureScaleS]{./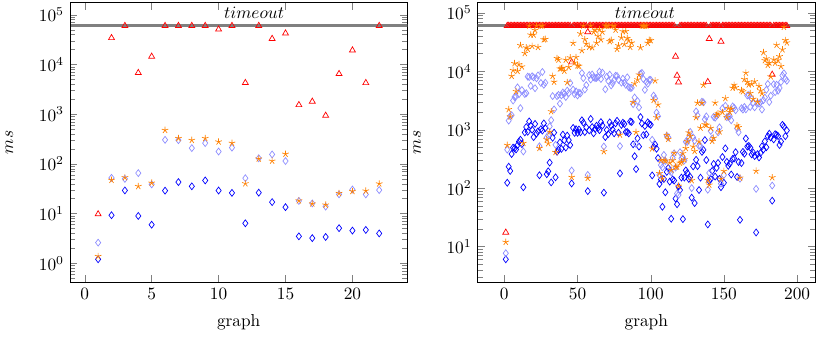}
\caption{Non-Desarguesian projective planes of order 16 (\texttt{pp16}, left) and order 25 (\texttt{pp25}, right), timeout is $60$ seconds. The x-axis arbitrarily orders the graphs in the set.} \label{fig:bench12}
\end{subfigure}\quad\quad
\begin{subfigure}{.45\textwidth}
	\centering
	\includegraphics[scale=\figureScaleS]{./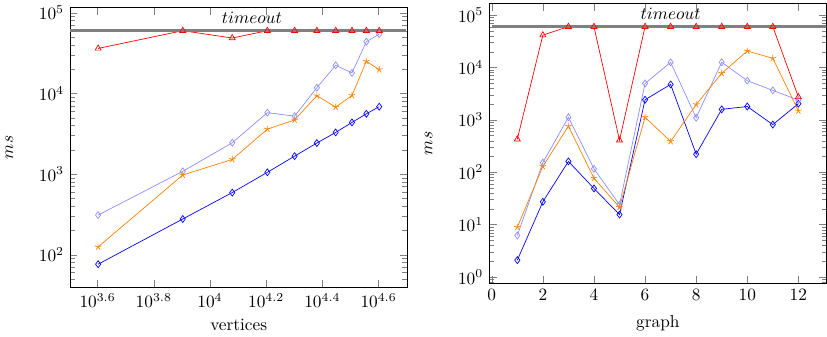}
\caption{CFI graphs (\texttt{large-cfi}, left) and a variety of graphs by Gordon Royle (\texttt{combinatorial}, right), timeout is $60$ seconds. For the latter, the x-axis arbitrarily orders the graphs in the set.} \label{fig:bench13}
\end{subfigure}
\centering
\caption{Detailed plots for benchmarks.} \label{fig:all_experiments}
\end{figure}

\subsection{Scaling} \label{subsec:scaling}
We want to discuss in detail how \dejavu{} scales with the use of more threads.
In Figure~\ref{fig:scaling}, scaling is illustrated. 
To lessen the bonus of sampling cell selectors (see Section~\ref{sec:implementation}) gained when adding threads, we provide the $1$ and $2$ thread variants with the best cell selector for each set. (Otherwise, scaling would be even better, but the comparison would not be fair.)  
The diagram shows the summed up runtimes for a graph class relative to the runtime of a single thread, i.e., the single thread variant has a runtime of $1$. 
This might overrepresent larger instances in the data, but of course larger instances are exactly those of biggest interest. 

The diagram shows that for most sets of the benchmark suite, on our hardware, \dejavu{} scales remarkably with the use of more threads. 
Overall we achieved our goal of designing a competitive tool that on most graph classes can efficiently exploit parallelism, in our benchmarks sometimes even approaching what is theoretically possible on 8 cores.

With regard to achieving perfect parallelization, there are however some caveats.  
The implementation has some aspects which do not scale with more threads. 
Typically these aspects have a negligible contribution to the overall runtime, however, on some particular instance types they do not. 
We distinguish between conceptual and technlological aspects, which we discuss in the following in order to explain the outliers in Figure~\ref{fig:scaling}.

\paragraph{Conceptual} There can be structural reasons why a work load might not scale with more threads using our approach. 
The initial color refinement and target leaf computation are sequential in our algorithm. 
Therefore, if an input is solved already by applying only color refinement, we can not expect our algorithm to scale. 
This happens on the set \texttt{ransq}.
Furthermore, if the automorphism group of a graph contains very few generators which can be found quite easily, then there is often no way for \dejavu{} to split it accordingly. The property of having an ``easy'' search tree is independent of the graph size: this can also be true on very large graphs that take a considerable amount of time. 
The \texttt{hypercubes} set of the benchmarks are a prime example for this --- they have large running times but very small search trees. 
Relative to their size, they are however easy to solve. 

\paragraph{Technological} On the technological side we observe two main sticking points of the current implementation: first of all, if a graph is solved too quickly, the various costs of using threads (startup, communication, cache performance, ...) can not be amortized, which is of course to be expected.
This causes the unfavorable scaling in Figure~\ref{fig:scaling} for the \texttt{lattice} and \texttt{mz-aug2} sets. 
They consist of too many small work loads.
However, Figure~\ref{fig:bench6} and Figure~\ref{fig:bench5} show that with growing instance sizes scaling improves greatly for both sets.

Secondly, there can be the case of degrading memory performance when adding threads.
For our processor
we observed this for sparse graphs with more than $10^5$ vertices and few automorphisms
Again, we can observe this effect on the \texttt{hypercubes} set.


\section{Conclusion and Future Work} \label{sec:conclusion} 
We presented the new randomized, parallel algorithm \dejavu{} that computes automorphism groups and can be used to compute symmetries of combinatorial objects. Benchmarks show that \dejavu{} is competitive with state-of-the-art solvers, and parallelizes to $8$ cores remarkably well on a wide variety of instances.
In order to show further scaling in the same generic manner, we believe developing more large-scale, meaningful benchmark sets is required.
For harder instances, preliminary testing shows that the use of more cores significantly improves the runtimes even further.

In future work, we intend to improve \dejavu{} by adding dedicated subroutines to handle low degree vertices and gadget graph constructions.
On very simple graphs, color refinement is the bottleneck which seems to necessitate an efficient parallel implementation for color refinement. 

\section*{Acknowledgements}
We thank Adolfo Piperno, Brendan McKay, Tommi Junttila, and Petteri Kaski for discussions providing us with deeper insights into their isomorphism solvers.
We also want to thank our colleagues Thomas Schneider, Jendrik Brachter, and Moritz Lichter for the fruitful discussions we had on some of the topics in this paper.

\bibliography{main}{}

\begin{thebibliography}{10}

\bibitem{dejavuweb}
dejavu.
\newblock \url{www.mathematik.tu-darmstadt.de/dejavu}.

\bibitem{nautyTracesweb}
{nauty and Traces}.
\newblock \url{http://pallini.di.uniroma1.it}.

\bibitem{alenextoappear}
Markus Anders and Pascal Schweitzer.
\newblock Engineering a fast probabilistic isomorphism test.
\newblock In {\em 2021 Proceedings of the Symposium on Algorithm Engineering
  and Experiments ({ALENEX})}. {SIAM}, 2021.
\newblock to appear.
\newblock \href {https://doi.org/10.1137/1.9781611976472.6}
  {\path{doi:10.1137/1.9781611976472.6}}.

\bibitem{DBLP:conf/icalp/AndersS21}
Markus Anders and Pascal Schweitzer.
\newblock Search problems in trees with symmetries: Near optimal traversal
  strategies for individualization-refinement algorithms.
\newblock In Nikhil Bansal, Emanuela Merelli, and James Worrell, editors, {\em
  48th International Colloquium on Automata, Languages, and Programming,
  {ICALP} 2021, July 12-16, 2021, Glasgow, Scotland (Virtual Conference)},
  volume 198 of {\em LIPIcs}, pages 16:1--16:21. Schloss Dagstuhl -
  Leibniz-Zentrum f{\"{u}}r Informatik, 2021.
\newblock \href {https://doi.org/10.4230/LIPIcs.ICALP.2021.16}
  {\path{doi:10.4230/LIPIcs.ICALP.2021.16}}.

\bibitem{OpenMP}
Vijaya Balpande and Anjali Mahajan.
\newblock Article: Parallelization of graph isomorphism using {OpenMP}.
\newblock {\em International Journal of Computer Applications}, 117(8):33--41,
  May 2015.
\newblock Full text available.
\newblock URL: \url{http://dx.doi.org/10.5120/20576-2982}, \href
  {https://doi.org/10.5120/20576-2982} {\path{doi:10.5120/20576-2982}}.

\bibitem{DBLP:conf/cav/ChuJ12}
Duc{-}Hiep Chu and Joxan Jaffar.
\newblock A complete method for symmetry reduction in safety verification.
\newblock In {\em Proceedings of the 24th international conference on Computer
  Aided Verification}, volume 7358 of {\em Lecture Notes in Computer Science},
  pages 616--633. Springer, 2012.
\newblock \href {https://doi.org/10.1007/978-3-642-31424-7_43}
  {\path{doi:10.1007/978-3-642-31424-7_43}}.

\bibitem{Darga:2004:ESS:996566.996712}
Paul~T. Darga, Mark~H. Liffiton, Karem~A. Sakallah, Igor~L. Markov, and Igor~L.
  Markov.
\newblock Exploiting structure in symmetry detection for {CNF}.
\newblock In {\em Proceedings of the 41st Annual Design Automation Conference},
  DAC '04, pages 530--534, New York, NY, USA, 2004. ACM.
\newblock URL: \url{http://doi.acm.org/10.1145/996566.996712}, \href
  {https://doi.org/10.1145/996566.996712} {\path{doi:10.1145/996566.996712}}.

\bibitem{DBLP:conf/nips/GensD14}
Robert Gens and Pedro~M. Domingos.
\newblock Deep symmetry networks.
\newblock In {\em Proceedings of the 27th International Conference on Neural
  Information Processing Systems}, pages 2537--2545, 2014.
\newblock URL: \url{https://dl.acm.org/doi/abs/10.5555/2969033.2969110}.

\bibitem{DBLP:reference/fai/GentPP06}
Ian~P. Gent, Karen~E. Petrie, and Jean{-}Fran{\c{c}}ois Puget.
\newblock Symmetry in constraint programming.
\newblock In Francesca Rossi, Peter van Beek, and Toby Walsh, editors, {\em
  Handbook of Constraint Programming}, volume~2 of {\em Foundations of
  Artificial Intelligence}, pages 329--376. Elsevier, 2006.
\newblock URL: \url{http://dx.doi.org/10.1016/S1574-6526(06)80014-3}, \href
  {https://doi.org/10.1016/S1574-6526(06)80014-3}
  {\path{doi:10.1016/S1574-6526(06)80014-3}}.

\bibitem{DBLP:conf/forte/Godefroid99}
Patrice Godefroid.
\newblock Exploiting symmetry when model-checking software.
\newblock In {\em Proceedings of the {IFIP TC6 WG6.1} Joint International
  Conference on Formal Description Techniques for Distributed Systems and
  Communication Protocols ({FORTE XII}) and Protocol Specification, Testing and
  Verification ({PSTV XIX})}, volume 156 of {\em {IFIP} Conference
  Proceedings}, pages 257--275. Kluwer, 1999.
\newblock \href {https://doi.org/10.1007/978-0-387-35578-8_15}
  {\path{doi:10.1007/978-0-387-35578-8_15}}.

\bibitem{DBLP:journals/cacm/HeuleK17}
Marijn J.~H. Heule and Oliver Kullmann.
\newblock The science of brute force.
\newblock {\em Communications of the {ACM}}, 60(8):70--79, 2017.
\newblock URL: \url{https://dl.acm.org/doi/10.1145/3107239}, \href
  {https://doi.org/10.1145/3107239} {\path{doi:10.1145/3107239}}.

\bibitem{DBLP:conf/alenex/JunttilaK07}
Tommi~A. Junttila and Petteri Kaski.
\newblock Engineering an efficient canonical labeling tool for large and sparse
  graphs.
\newblock In {\em Proceedings of the Nine Workshop on Algorithm Engineering and
  Experiments, {ALENEX} 2007, New Orleans, Louisiana, USA, January 6, 2007}.
  {SIAM}, 2007.
\newblock \href {https://doi.org/10.1137/1.9781611972870.13}
  {\path{doi:10.1137/1.9781611972870.13}}.

\bibitem{DBLP:conf/sat/KatebiSM10}
Hadi Katebi, Karem~A. Sakallah, and Igor~L. Markov.
\newblock Symmetry and satisfiability: An update.
\newblock In Ofer Strichman and Stefan Szeider, editors, {\em Theory and
  Applications of Satisfiability Testing - {SAT} 2010, 13th International
  Conference, {SAT} 2010, Edinburgh, UK, July 11-14, 2010. Proceedings}, volume
  6175 of {\em Lecture Notes in Computer Science}, pages 113--127. Springer,
  2010.
\newblock URL: \url{http://dx.doi.org/10.1007/978-3-642-14186-7_11}, \href
  {https://doi.org/10.1007/978-3-642-14186-7_11}
  {\path{doi:10.1007/978-3-642-14186-7_11}}.

\bibitem{DBLP:conf/alenex/KutzS07}
Martin Kutz and Pascal Schweitzer.
\newblock Screwbox: a randomized certifying graph-non-isomorphism algorithm.
\newblock In {\em Proceedings of the Ninth Workshop on Algorithm Engineering
  and Experiments, {ALENEX} 2007, New Orleans, Louisiana, USA, January 6,
  2007}. {SIAM}, 2007.
\newblock \href {https://doi.org/10.1137/1.9781611972870.14}
  {\path{doi:10.1137/1.9781611972870.14}}.

\bibitem{10.1007/978-1-4614-1927-3_9}
Leo Liberti.
\newblock Symmetry in mathematical programming.
\newblock In Jon Lee and Sven Leyffer, editors, {\em Mixed Integer Nonlinear
  Programming}, pages 263--283, New York, NY, 2012. Springer New York.
\newblock \href {https://doi.org/10.1007/978-1-4614-1927-3_9}
  {\path{doi:10.1007/978-1-4614-1927-3_9}}.

\bibitem{DBLP:journals/ftcgv/LiuHKG10}
Yanxi Liu, Hagit Hel{-}Or, Craig~S. Kaplan, and Luc~Van Gool.
\newblock Computational symmetry in computer vision and computer graphics.
\newblock {\em Foundations and Trends in Computer Graphics and Vision},
  5(1-2):1--195, 2010.
\newblock URL: \url{http://doi.org/10.1561/9781601983657}, \href
  {https://doi.org/10.1561/9781601983657} {\path{doi:10.1561/9781601983657}}.

\bibitem{DBLP:conf/wea/Lopez-PresaCA13}
Jos{\'{e}}~Luis L{\'{o}}pez{-}Presa, Luis~N{\'{u}}{\~{n}}ez Chiroque, and
  Antonio~Fern{\'{a}}ndez Anta.
\newblock Novel techniques for automorphism group computation.
\newblock In {\em Experimental Algorithms, 12th International Symposium, {SEA}
  2013, Rome, Italy, June 5-7, 2013. Proceedings}, volume 7933 of {\em LNCS},
  pages 296--307. Springer, 2013.
\newblock \href {https://doi.org/10.1007/978-3-642-38527-8_27}
  {\path{doi:10.1007/978-3-642-38527-8_27}}.

\bibitem{McKay201494}
Brendan~D. McKay and Adolfo Piperno.
\newblock Practical graph isomorphism, {II}.
\newblock {\em Journal of Symbolic Computation}, 60(0):94--112, 2014.
\newblock URL:
  \url{http://www.sciencedirect.com/science/article/pii/S0747717113001193},
  \href {https://doi.org/10.1016/j.jsc.2013.09.003}
  {\path{doi:10.1016/j.jsc.2013.09.003}}.

\bibitem{DBLP:journals/csur/MillerDC06}
Alice Miller, Alastair~F. Donaldson, and Muffy Calder.
\newblock Symmetry in temporal logic model checking.
\newblock {\em {ACM} Computing Surveys}, 38(3), 2006.
\newblock URL: \url{http://doi.acm.org/10.1145/1132960.1132962}, \href
  {https://doi.org/10.1145/1132960.1132962}
  {\path{doi:10.1145/1132960.1132962}}.

\bibitem{DBLP:conf/stoc/NeuenS18}
Daniel Neuen and Pascal Schweitzer.
\newblock An exponential lower bound for individualization-refinement
  algorithms for graph isomorphism.
\newblock In Ilias Diakonikolas, David Kempe, and Monika Henzinger, editors,
  {\em Proceedings of the 50th Annual {ACM} {SIGACT} Symposium on Theory of
  Computing, {STOC} 2018, Los Angeles, CA, USA, June 25-29, 2018}, pages
  138--150. {ACM}, 2018.
\newblock \href {https://doi.org/10.1145/3188745.3188900}
  {\path{doi:10.1145/3188745.3188900}}.

\bibitem{DBLP:conf/iccS/RuizG08}
Irene~Luque Ruiz and Miguel~{\'{A}}ngel G{\'{o}}mez{-}Nieto.
\newblock A {Java} tool for the management of chemical databases and similarity
  analysis based on molecular graphs isomorphism.
\newblock In {\em Computational Science - {ICCS} 2008, 8th International
  Conference, Krak{\'{o}}w, Poland, June 23-25, 2008, Proceedings, Part {II}},
  volume 5102 of {\em LNCS}, pages 369--378. Springer, 2008.
\newblock \href {https://doi.org/10.1007/978-3-540-69387-1_41}
  {\path{doi:10.1007/978-3-540-69387-1_41}}.

\bibitem{seress_2003}
{\'{A}}kos Seress.
\newblock {\em Permutation Group Algorithms}.
\newblock Cambridge Tracts in Mathematics. Cambridge University Press, 2003.
\newblock \href {https://doi.org/10.1017/CBO9780511546549}
  {\path{doi:10.1017/CBO9780511546549}}.

\bibitem{DBLP:journals/jmlr/ShervashidzeSLMB11}
Nino Shervashidze, Pascal Schweitzer, Erik~Jan van Leeuwen, Kurt Mehlhorn, and
  Karsten~M. Borgwardt.
\newblock {Weisfeiler-Lehman} graph kernels.
\newblock {\em Journal of Machine Learning Research}, 12:2539--2561, 2011.
\newblock URL: \url{https://dl.acm.org/doi/10.5555/1953048.2078187}, \href
  {https://doi.org/10.5555/1953048.2078187}
  {\path{doi:10.5555/1953048.2078187}}.

\bibitem{Tener:2009:ADI:1925258}
Greg~Daniel Tener.
\newblock {\em Attacks on Difficult Instances of Graph Isomorphism: Sequential
  and Parallel Algorithms}.
\newblock PhD thesis, University of Central Florida, Orlando, FL, USA, 2009.
\newblock AAI3401107.

\end{thebibliography}
\bibliographystyle{plainurl}
\end{document}